\documentclass[conference]{IEEEtran}
\IEEEoverridecommandlockouts
% The preceding line is only needed to identify funding in the first footnote. If that is unneeded, please comment it out.

\usepackage[usenames,dvipsnames]{xcolor}
\usepackage{amssymb}
\usepackage{amsthm}
\usepackage{amsmath}
\usepackage{xspace,paralist}
\usepackage[noend]{algpseudocode}
\usepackage{algorithm}
\usepackage{graphicx}
\usepackage{tikz}
\usepackage[sort]{cite}

\usepackage{caption}
\usepackage{subcaption}

\usetikzlibrary{shapes}
\usetikzlibrary{chains,scopes}
\usetikzlibrary{positioning,calc,arrows.meta}
\usepackage{pgfplots, pgfplotstable}
\pgfplotsset{compat=1.17}
\usepackage{scalerel}
\usetikzlibrary{svg.path}

\definecolor{orcidlogocol}{HTML}{A6CE39}
\tikzset{
    orcidlogo/.pic={
        \fill[orcidlogocol] svg{M256,128c0,70.7-57.3,128-128,128C57.3,256,0,198.7,0,128C0,57.3,57.3,0,128,0C198.7,0,256,57.3,256,128z};
        \fill[white] svg{M86.3,186.2H70.9V79.1h15.4v48.4V186.2z}
        svg{M108.9,79.1h41.6c39.6,0,57,28.3,57,53.6c0,27.5-21.5,53.6-56.8,53.6h-41.8V79.1z M124.3,172.4h24.5c34.9,0,42.9-26.5,42.9-39.7c0-21.5-13.7-39.7-43.7-39.7h-23.7V172.4z}
        svg{M88.7,56.8c0,5.5-4.5,10.1-10.1,10.1c-5.6,0-10.1-4.6-10.1-10.1c0-5.6,4.5-10.1,10.1-10.1C84.2,46.7,88.7,51.3,88.7,56.8z};
    }
}

\newcommand\orcidicon[1]{\href{https://orcid.org/#1}{\mbox{\scalerel*{
                \begin{tikzpicture}[yscale=-1,transform shape]
                \pic{orcidlogo};
                \end{tikzpicture}
            }{|}}}}
\usepackage{hyperref}
\usepackage[capitalise,nameinlink,noabbrev]{cleveref}
\newif\ifextendedversion
\extendedversionfalse
\ifextendedversion
\pratendSetGlobal{text link={See \hyperref[proof:prAtEnd\thecounterAllProofEnd]{proof} in the extended version~\cite{extendeddoacc} }}
\fi
\newif\ifasyncdstonode
\asyncdstonodefalse
%
%
%\newbool{asyncdstonode}
%\booltrue{asyncdstonode}
%\crefalias{AlgoLine}{line}
%
%% Define commands for comments by team members

%
\newtheorem{theorem}{Theorem}
%%Define new commands for styling/notations
\newcommand{\agreedist}{\ensuremath{\mathfrak{d}}\xspace}

\newcommand{\dstimebound}{\ensuremath{T_{ds}}\xspace}

\newcommand{\true}{\ensuremath{\mathit{true}}\xspace}

\newcommand{\smr}{\ensuremath{\mathsf{SMR}}\xspace}
\newcommand{\command}[1]{\ensuremath{\mathsf{#1}}\xspace}

\newcommand{\doracc}{DORA-CC\xspace}
\newcommand{\dora}{DORA\xspace}

\newcommand{\sotaoracle}{OCR\xspace}
\newcommand{\fallbacktimer}{\ensuremath{T_{fallback}}\xspace}
\newcommand{\commodity}{\ensuremath{\tau}\xspace}
\newcommand{\svalue}{\ensuremath{\mathcal{S}}\xspace}
\newcommand{\BTC}{\ensuremath{\mathtt{BTC}}\xspace}

\newcommand{\USD}{\ensuremath{\mathtt{USD}}\xspace}

\newcommand{\Nat}{\ensuremath{\mathbb{N}}}

\newcommand{\median}[1]{\ensuremath{\mathcal{Q}_{0.5}(#1)}}

\newcommand{\Omit}[1]{}
\newcommand{\agree}[2]{\ensuremath{\|#1-#2\|_1} \leq \agreedist}

\newcommand{\honest}{\ensuremath{\mathcal{H}}}
\newcommand{\hmax}{\ensuremath{\honest_{max}}}
\newcommand{\hmin}{\ensuremath{\honest_{min}}}
\newcommand{\aggset}{\ensuremath{\mathbb{A}}\xspace}
\newcommand{\Byz}{\ensuremath{\mathcal{B}}}
\newcommand{\hldiff}[1]{#1}

\newcommand{\MB}[1]{\ensuremath{\mathbf{#1}}}
\newcommand{\send}{\MB{send}\xspace}

\newcommand{\errorbound}{\ensuremath{\|\hmin-\hmax\|_1+\agreedist}\xspace}
\newcommand{\postsmr}[1]{\ensuremath{\command{post_{\smr}}(#1)}\xspace}
\newcommand{\Value}{\ensuremath{\mathsf{VALUE}}\xspace}
\newcommand{\Vprop}{\ensuremath{\mathsf{VPROP}}\xspace}

\newcommand{\Vpost}{\ensuremath{\mathsf{VPOST}}\xspace}

\newcommand{\FTpost}{\ensuremath{\mathsf{FTPOST}}\xspace}

\newcommand{\Votevp}{\ensuremath{\mathsf{VOTEVP}\xspace}}

\newcommand{\Voteft}{\ensuremath{\mathsf{VOTEFT}\xspace}}
\newcommand{\quorum}{\ensuremath{\mathcal{QC}}\xspace}
\newcommand{\adiststartdate}{1-Oct-2022\xspace}
\newcommand{\adistenddate}{10-Oct-2022\xspace}
\newcommand{\simstartdate}{11-Oct-2022\xspace}
\newcommand{\simenddate}{10-Nov-2022\xspace}
%exchanges
\newcommand{\binance}{\textsf{Binance}\xspace}
\newcommand{\coinbase}{\textsf{Coinbase}\xspace}
\newcommand{\cryptocom}{\textsf{crypto.com}\xspace}
\newcommand{\ftx}{\textsf{FTX}\xspace}
\newcommand{\huobi}{\textsf{Huobi}\xspace}
\newcommand{\okcoin}{\textsf{OKCoin}\xspace}
\newcommand{\okex}{\textsf{OKEx}\xspace}

\newtheorem{property}{Property}

\theoremstyle{definition}
\newtheorem{definition}{Definition}

\def\cert{\ensuremath{\mathcal{QC}}\xspace}
\def\BibTeX{{\rm B\kern-.05em{\sc i\kern-.025em b}\kern-.08em
    T\kern-.1667em\lower.7ex\hbox{E}\kern-.125emX}}

%\SetWatermarkText{DRAFT-Confidential}
%\SetWatermarkScale{0.5}
%
%
%
\makeatletter
\newcounter{HALG@line}
\renewcommand{\theHALG@line}{\thealgorithm.\arabic{ALG@line}}
\makeatother

\begin{document}

%\title{Conference Paper Title*\\
%{\footnotesize \textsuperscript{*}Note: Sub-titles are not captured in Xplore and
%should not be used}
%\thanks{Identify applicable funding agency here. If none, delete this.}
%}

\title{DORA: Distributed Oracle Agreement with Simple Majority\textsuperscript{**} 
\thanks{\textsuperscript{**} Author names are in alphabetical order of their last name.}}
%\title{Oracle Agreement: From an Honest Super Majority to Simple Majority\textsuperscript{**} 
\author{\IEEEauthorblockN{Prasanth Chakka\IEEEauthorrefmark{1},  Saurabh Joshi\IEEEauthorrefmark{2}\textsuperscript{\orcidicon{0000-0001-8070-1525}}, Aniket Kate\IEEEauthorrefmark{3}\textsuperscript{\orcidicon{0000-0003-2246-8416}}, Joshua Tobkin\IEEEauthorrefmark{4} and David Yang\IEEEauthorrefmark{5}\textsuperscript{\orcidicon{0009-0002-8101-9434}}}
\IEEEauthorblockA{\IEEEauthorrefmark{1}\IEEEauthorrefmark{2}\IEEEauthorrefmark{3}\IEEEauthorrefmark{4}\IEEEauthorrefmark{5}Supra Research \\
\IEEEauthorrefmark{3} Purdue University \\
\IEEEauthorrefmark{5} Hamad bin Khalifa University \\
%\{\IEEEauthorrefmark{1}p.chakka,\IEEEauthorrefmark{2}s.joshi, \IEEEauthorrefmark{3}a.kate, \IEEEauthorrefmark{4}j.tobkin,\IEEEauthorrefmark{5}y.yang\}@supraoracles.com
}
%\and 
%\IEEEauthorblockN{Saurabh Joshi}
%\IEEEauthorblockA{Supra Research}\\
%\textit{0000-0001-8070-1525}
%\and 
%\IEEEauthorblockN{Aniket Kate}
%\IEEEauthorblockA{Supra Research}
%\and 
%\IEEEauthorblockN{Joshua Tobkin}
%\IEEEauthorblockA{Supra Research}
%\and
%\IEEEauthorblockN{David Yang}
%\IEEEauthorblockA{Supra Research} \\
%\textit{name of organization (of Aff.)}\\
%City, Country \\
%email address or ORCID}
%\and
%\IEEEauthorblockN{2\textsuperscript{nd} Given Name Surname}
%\IEEEauthorblockA{\textit{dept. name of organization (of Aff.)} \\
%\textit{name of organization (of Aff.)}\\
%City, Country \\
%email address or ORCID}
%\and
%\IEEEauthorblockN{3\textsuperscript{rd} Given Name Surname}
%\IEEEauthorblockA{\textit{dept. name of organization (of Aff.)} \\
%\textit{name of organization (of Aff.)}\\
%City, Country \\
%email address or ORCID}
%\and
%\IEEEauthorblockN{4\textsuperscript{th} Given Name Surname}
%\IEEEauthorblockA{\textit{dept. name of organization (of Aff.)} \\
%\textit{name of organization (of Aff.)}\\
%City, Country \\
%email address or ORCID}
%\and
%\IEEEauthorblockN{5\textsuperscript{th} Given Name Surname}
%\IEEEauthorblockA{\textit{dept. name of organization (of Aff.)} \\
%\textit{name of organization (of Aff.)}\\
%City, Country \\
%email address or ORCID}
%\and
%\IEEEauthorblockN{6\textsuperscript{th} Given Name Surname}
%\IEEEauthorblockA{\textit{dept. name of organization (of Aff.)} \\
%\textit{name of organization (of Aff.)}\\
%City, Country \\
%email address or ORCID}
}

\maketitle
\begin{abstract}
Oracle networks feeding off-chain information to a blockchain are required to solve a distributed agreement problem since these networks receive information from multiple sources and at different times. We make a key observation that in most cases, the value obtained by oracle network nodes from multiple information sources are in close proximity. We define a notion of agreement distance and leverage the availability of a state machine replication (SMR) service to solve this distributed agreement problem with an honest simple majority of nodes instead of the conventional requirement of an honest super majority of nodes.  
Values from multiple nodes being in close proximity, therefore, forming a coherent cluster, is one of the keys to its efficiency. Our asynchronous protocol also embeds a fallback mechanism if the coherent cluster formation fails.
 Through simulations using real-world exchange data from seven prominent exchanges, we show that even for very small agreement distance values, the protocol would be able to form coherent clusters and therefore, can safely tolerate up to $1/2$ fraction of Byzantine nodes. We also show that, for a small statistical error, it is possible to choose the size of the oracle network  
to be significantly smaller than the entire system tolerating up to a $1/3$ fraction of Byzantine failures. This allows the oracle network to operate much more efficiently and horizontally scale much better.

\end{abstract}
\begin{IEEEkeywords}
	Blockchain, Oracle, Byzantine Agreement, Agreement Distance
\end{IEEEkeywords}

\nocite{*}
\section{Introduction}
Connecting existing Web 2.0 data sources to  blockchains is crucial for next-generation blockchain applications such as decentralized finance (DeFi). An oracle network~\cite{ACMSurvey,9086815} consisting of a network of independent nodes aims to address  this issue by allowing blockchain smart contracts to function over inputs obtained from existing Web 2.0 data, real-world sensors, and computation interfaces.

Performing this information exchange securely, however, is not trivial. First, only a few data sources may be available to pick from, and some of them may crash (due to a {\em Denial of Service} (DOS) attacks or system failures), or even send  incorrect information (due to a system compromise or some economic incentives).~\cite{Luna,CoinMarketCap} Second, as most of the data sources today do not offer data in a signed form, the oracle network  also becomes vulnerable due to the compromise of a subset of oracle network nodes, a subset of data sources, or due to the compromise of a combination of the two.  Third, the adversary may go after the availability of the system (and at times safety) by malevolently slowing down the protocols. We address these issues and propose a robust and scalable distributed solution for solving the oracle problem. 
While our approach can withstand extreme adversarial settings, we make some real-world synchrony and input-data distribution observations, and introduce oracle execution sharding. This makes our solution scale quite well as the number of oracle services and the size of the oracle network grows.

One of the key objectives of an oracle service is to take a piece of off-chain information and bring it to the on-chain world. Therefore, any such service must have three components, (i) sources of information, which we shall refer to as  data sources, (ii) network of nodes, and (iii) target component in an on-chain environment, for example, a smart contract. To maintain fault tolerance capabilities as well as the decentralized nature of the service, it is necessary to have multiple data sources in addition to having a network of multiple nodes. The oracle agreement problem focuses on producing a unique value that is representative of the values emanating from the honest data sources that are feeding information to the oracle network.  We need a protocol for ensuring that all the honest oracle nodes have the same output that is representative of all the honest data sources. Notice that this is non-trivial as all the honest nodes may still have different outputs since they may be listening to different sets of data sources at slightly different times. We call this problem a {\em Distributed Oracle Agreement} (\dora) problem. \dora shares the same termination and agreement properties with the well-studied {\em Byzantine agreement} (BA) problem~\cite{FLP85,LSP82}.  However, the crucial validity property for \dora is significantly more generic. BA demands that the output be the same as an honest node's input if all the honest nodes have the same input. \dora is a generalization of this, where the output will be a value within a range  defined by the minimum and maximum honest inputs. As \dora generalizes the BA problem, it requires the standard $67\%$ honest majority among the participating nodes. As the system scale and number of participating nodes increase, solutions to \dora may not scale, especially when we collect many different kinds of variables. For a full-fledged blockchain system, we expect the number of variables, for which representative values are to be computed, to be in the order of several hundred.

Let us first understand how one of the state-of-the-art oracle \sotaoracle protocol~\cite{ocr} works. An oracle network consisting of multiple nodes obtains different values from multiple sources of information as shown in \cref{Fi:oracle} and agrees on a single report/output. As we discuss in \cref{sec:DORADef}, this problem is closely related to the standard Byzantine agreement problem~\cite{jacm80lamport}. In presence of Byzantine nodes and non-synchronous communication links, the oracle network has to have at least $3f+1$ nodes, where $f$ is the upper-bound on the number of nodes that can turn Byzantine during the agreement protocol~\cite{jacm88}. These oracle nodes, run a Byzantine agreement protocol, to agree on a subset of $2f+1$ values. In \sotaoracle, one of the nodes is designated as a {\em leader} which would send this agreed-upon subset to the Blockchain. If the leader happens to be Byzantine, after a certain timeout period, the \sotaoracle protocol initiates a leader change.
\Omit{
\begin{figure}
	\centering
\begin{tikzpicture}
\tikzstyle{data source}=[rectangle,fill=black!25]
\tikzstyle{onodes}=[circle,fill=green!50]
\tikzstyle{oval}=[draw,ellipse]
%draw datasources
\foreach \name / \y in {0,...,4}
 	\node[data source] (ds-\name) at (0,-\y) {};
%data source label
	\node[above=0.5cm of ds-0] (dslabel) {Data Sources};

%draw committee nodes
\foreach \name / \y in {0,...,4}
\node[onodes] (on-\name) at (3,-\y) {};

%edges from data source to nodes
\foreach \dest in {0,...,4}
\draw[] foreach \x [evaluate=\x as \y using {mod(int(\x+\dest),5)}] in {0,...,2} {(ds-\y)--(on-\dest)};
%committee
\node[oval,minimum width=1cm, minimum height=5cm, below=-0.7cm of on-0] (comm)  {};
%committee label
\node[right=-0.1cm of comm,rotate=90,yshift=0.2cm] (commlabel) {committee};
%tribe
\node[oval,minimum width=5cm, minimum height=6cm, right=-3cm of comm,xshift=1cm] (tribe) {};
%tribe label
\node[above=-0.5cm of tribe] (tribelabel) {tribe};
%smr tail
\node[right=3.5cm of tribe.north] (tail) {};
%smr head
\node[right=3.5cm of tribe.south] (head) {};
%smr chain arrow
\draw[-{Triangle[width=18pt,length=8pt]}, line width=10pt](tail)--(head) node[midway] (smrmid)  {};
%smr label
\node[left=0.1cm of smrmid,rotate=90] (smrlabel) {SMR Chain};
%from oracle node to smr
\draw[->] let \p1=(on-0), \p2=(smrmid) in (on-0)--(\x2-5,\y1);
%from oracle node to smr
\draw[->] let \p1=(on-2), \p2=(smrmid) in (on-2)--(\x2-5,\y1);
%\node[right=0.5cm of tribe,rotate=90] (smrlabel) {SMR Chain};

\end{tikzpicture}
\caption{Oracle Model}
\label{Fi:oracle}
\end{figure}
}

\begin{figure}
	\centering
 	\includegraphics[scale=0.4]{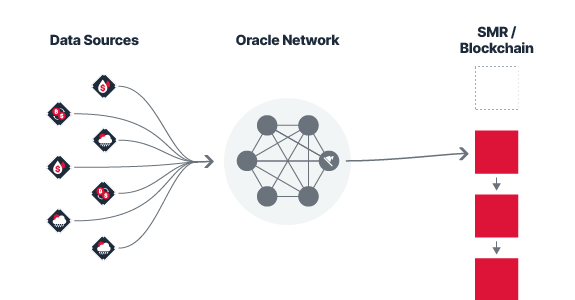}

	\caption{\sotaoracle protocol: Network of oracle nodes obtaining data from data sources, running a Byzantine agreement protocol and then broadcasting the agreed upon value to the world via SMR/Blockchain  }
	\label{Fi:oracle}
\end{figure}

A crucial point to note is that in \sotaoracle, the SMR/Blockchain is used merely as a means to broadcast/publish the agreed-upon set to the world.  Even without the SMR/Blockchain, \sotaoracle protocol ensures that all the honest nodes of the oracle network agree upon the exact same subset of $2f+1$ values. In the network with $f$ Byzantine nodes, $2f+1$ values are required to ensure that statistical aggregation via computing the median does not deviate {\em too much}. Essentially, $2f+1$ values ensure that the median is upper-bounded and lower-bounded by values from the honest nodes.

Blockchain ensures that the messages on the chain have a total order. Further, it also ensures that any entity observing the state of a Blockchain would witness the exact same total order. Not utilizing the ability of Blockchain to act as an ordering service is a missed opportunity by the oracle networks in general.

\subsection{Our Approach}
We leverage the ordering capabilities of the Blockchain by presenting a protocol that both sends and receives information from the Blockchain to accomplish the goal of publishing a value on the Blockchain that is representative of the values from all the data sources. Towards computing a representative value, we redefine the notion of agreement. We say that two nodes {\em agree} with each other if the values that they obtained from data sources are within a pre-defined parameter called {\em agreement distance}. We say that a set of values form a {\em coherent cluster} if all the values in that set are at most agreement distance away from one another. The oracle network now merely needs to agree on a coherent cluster of size $f+1$. Since there would be at least one honest value in any such coherent cluster, any statistical aggregator such as the mean or the median would be at most agreement distance away from an honest value, thus ensuring that the final agreed upon value does not deviate {\em too much} from an honest value.

The way our protocol achieves the agreement on a coherent cluster of size $f+1$ is that the first such cluster posted on the Blockchain would be considered authoritative for a given round of agreement.

Just like \sotaoracle protocol, we also have a designated node that we call an {\em aggregator}. These aggregators would gather $f+1$ signatures on a proposed set of $f+1$ values provided by the nodes and then post it on the SMR/Blockchain. To circumvent the problem of having a Byzantine aggregator, we sample a set of aggregators, henceforth called a family of aggregators, from the entire oracle network such that there is at least one honest aggregator. This helps us avoid having to initiate any aggregator change protocol. So now, we have multiple aggregators posting sets of size $f+1$ to the Blockchain, but only the first one would be accepted as authoritative by the oracle network. Note that all such coherent clusters of size $f+1$ contain values that are at most agreement distance away from some honest value.

With the redefined notion of agreement, and utilizing ordering capabilities of the Blockchain, we reduce the number of oracle nodes required to $2f+1$  instead of the usual requirement of $3f+1$.
\begin{figure}
	\centering
	\includegraphics[scale=0.5]{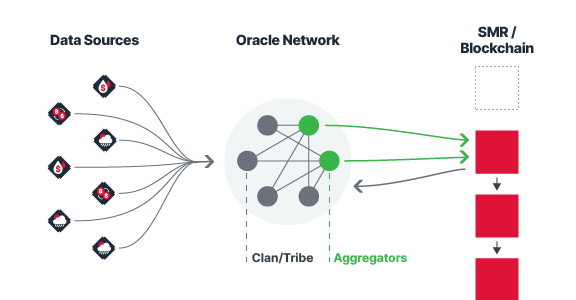}

	\caption{The oracle nodes collect information from various data sources. The oracle nodes exchange information with the aggregators to vote for the proposal of a coherent cluster of size $f+1$. The aggregators post these clusters to Blockchain and all oracle nodes consider the first cluster to be authoritative.}
	\label{Fi:supraoracle}
\end{figure}

\cref{Fi:supraoracle} shows how information flows across various components of our oracle protocol. Note that now the oracle nodes only have to communicate back and forth with the family of aggregators. All the oracle nodes and the aggregators observe the total order in which information appears on the SMR/Blockchain. Aggregators are the only set of nodes that sends information to the Blockchain.

Under unusual circumstances when none of the aggregators is able to form a coherent cluster, our protocol switches to a fallback mechanism. In this fallback mechanism, the requirement of total nodes again increases to $3f+1$ out of which the aggregators wait for $2f+1$ nodes to provide a value and compute a median. In this fallback mechanism, the arithmetic mean can not be used anymore since the values introduced by the Byzantine nodes in this set of size $2f+1$ could be unbounded. Therefore, our fallback mechanism, along the same lines as \sotaoracle, computes a median from the agreed upon set of size $2f+1$.

Our idea of the reduced requirement for the number of nodes, hinges upon the honest nodes producing values in close proximity most of the time. To prove that this assumption is practical and well-founded, we take \BTC price information from $7$ exchanges, that include \ftx, and run a simulation of our protocol on this data covering a $30$ day period that includes the turbulent period of \ftx collapse~\cite{ftxcollapse}. We run simulations where the round of agreement happens every $30$ and $60$ seconds. These simulations corroborate our close proximity of honest values assumption. We observe that for agreement distance as low as $\$25$ and $\$53$, when the average \BTC price is around $\$19605$, a coherent cluster can be formed in $93\%$ and $99\%$ rounds of agreement. 

To further scale our solution, we introduce execution sharding by sampling multiple sub-committees from the $3f+1$ nodes available in the entire system. The size and the number of such sub-committees are chosen such that each sub-committee has an honest simple majority with a very high probability. Since under normal circumstances we only need an honest simple majority, we equally divide the responsibilities to track multiple variables and bring their price information to the chain equally among these sub-committees. Since a sub-committee can now only provide a probabilistic safety guarantee, we analyze it in \cref{Se:probsafety} and establish that with a few hundred nodes, the safety guarantee holds with a very high probability.

The contributions of this paper are as follows:

\begin{compactitem}
\item We redefine the agreement notion by introducing a parameter called {\em agreement distance} (\cref{Se:prelim}).
\item We propose a novel protocol that achieves agreement with $2f+1$ nodes, where $f$ of the nodes could be Byzantine (\cref{Se:protocol}).
\item We propose a multi-aggregator model in our protocol to ensure liveness without introducing extra latency of aggregator/leader change (\cref{Se:protocol}).
\item We demonstrate via empirical analysis of real-world data (\cref{Se:empirical}) that most of the time the honest nodes would be able to provide values in very close proximity, therefore, the protocol can function with $2f+1$ nodes with very small margins of potential error.
\item Probability analysis of effects of the size of the network of nodes and the size of the family of aggregators on the safety of the protocol is provided in  \cref{Se:probsafety}. It is evident that with a few hundred nodes in the entire system, and with $10-15$ nodes in the family of aggregators, the protocol can function safely with very high probability.
\end{compactitem}

\section{Preliminaries \label{Se:prelim}}
In this section, we introduce some preliminaries and notations to which we will refer for  the remainder of this paper.

\subsection{Oracle Network}

Let $|S|$ denote the cardinality of a set $S$.
We shall use $\median{X}$ to denote the median of a set $X$ of values.
Our oracle network consists of a set $N_t$ of $|N_t|$ nodes, which we also call a tribe.
Among these, at most $f_t< \frac{|N_t|}{3}$ nodes may get compromised, that is to turn Byzantine, thus,  deviating from an agreed-upon protocol and behaving in an arbitrary fashion. A node is \textit{honest} if it is never Byzantine.
We assume a static adversary that
corrupts its nodes before the protocol begins. 

From these $N_t$ nodes, towards developing scalable solution, we also uniformly randomly draw a sub-committees $N_c$, henceforth called a clan,  of size $|N_c|$ such that at most $f_c< \frac{|N_c|}{2}$ nodes are Byzantine in a clan. The clan nodes are sampled uniformly at random from the tribe, and we ensure that the statistical error probability of a clan having more than $\lfloor \frac{|N_c|}{2}\rfloor$ Byzantine nodes is negligible. Probability analysis for such sampling is provided in \cref{Se:probsafety}.

All oracle nodes are connected by pair-wise authenticated point-to-point links. We assume this communication infrastructure to be asynchronous such that the (network) adversary can arbitrarily delay and reorder messages between two honest parties. As typical for all asynchronous systems, for the system's liveness properties, we assume that the adversary cannot drop messages between two honest parties.

 We initialize the protocol with a unique identifier to prevent replay attacks across concurrent protocol instances but do not explicitly mention this in the text.

A signed messages $m$ from a node $p_{i}$ are denoted by $m{(\cdot)}_{{i}}$. Similar to most recent SMR/blockchain designs, we assume a $(n,n-f)$ threshold BLS~\cite{boneh2001short,BachoL22} signature setup. We denote an $n-f$ threshold signature on the message $m$ as a quorum certificate $\cert{m}$.

While the use of threshold signatures offers a simple abstraction 
and can be verified on Ethereum,
we can also employ a multi-signature (multisig) setup allowed by 
the employed blockchain.

\subsection{Oracle Data Sources}

A data source is responsible for providing the correct value of a variable \commodity, which could be, say, the price of Bitcoin in US Dollars.
Let $DS$ denote a set of data sources and $BDS \subset DS$ denote the subset of these data sources which could be Byzantine. We say that a data source is Byzantine if : (i) it lies about the value of the variable, or (ii) if it is non-responsive. Otherwise, we will consider
the data source to be honest. 
We assume that $|BDS|\leq f_d$, where $f_d$ is the upper bound on the number of Byzantine data sources.

The goal of the tribe is to reach a consensus in a distributed fashion about a {\em representative value}, denoted as \svalue, of a particular \commodity. The notion of a representative value depends 
on the underlying \commodity. For example, we can say that the representative value 
%for the temperature of San Francisco on a given day is the average of all the measurements done by various temperature sensors spread across the city of San Francisco. A representative value 
of a particular stock could be considered to be a mean of the values of the stock at various stock exchanges.

In this paper, for now, we assume that for the variable $\commodity$ of interest, the arithmetic mean $\mu$  of values of \commodity from various data sources  is the representative value. 

An observation of a node $p_i$ from a data-source $ds_j$ of a variable $\commodity$  is
denoted as $o(p_i,ds_j,\commodity)$. We say that an observation $o(p_i,ds_j,\commodity)$ is an
honest observation if both $p_i$ and $ds_j$  are honest.  In an ideal world without any Byzantine nodes or Byzantine data sources, we would like the protocol followed by the oracle network to have the following property:

\begin{property}[Ideal Representative Value] \label{prop:wmean} $\svalue = \frac{\sum_{ds_j \in DS}  o(p_i,ds_j,\commodity)}{|DS|}$ \\
	where $p_i$ refers to any one of the honest nodes.
\end{property}

When the context is clear, we will just use $o$ to denote an observation. 
Let $O$ denote the set of all observations. We will use $O_{p_i}$ to denote the observations made by node ${p_i}$.
$\honest(O)$ and $\Byz(O)$ denote the set of honest observations and Byzantine observations respectively.
Let $\hmin(O_{p_i}) = \min_{\honest(O_{p_i})}$ and $\honest_{max}(O_{p_i}) = \max_{\honest(O_{p_i})}$ indicate the minimum and maximum values
from a given set of honest observations $\honest(O_{p_i})$. We will just use $\hmin$ and $\honest_{max}$ to refer to the minimum and the maximum values
amongst all honest observations $\honest(O)$.

We say that two observations $o_1$ and $o_2$ agree with each other if $\agree{o_1}{o_2}$. That is if the $L_1$ distance between them is at most $\agreedist$, where $\agreedist$ is a pre-defined parameter known as {\em agreement distance}.
A set of observations $CC \subseteq O$ is said to form a {\em coherent cluster} if $\forall_{o_1,o_2\in CC}: \agree{o_1}{o_2} $.

We will use the terms {\em majority} and {\em super majority} to denote that some entity has a quantity strictly greater than $\frac{1}{2}$  and $\frac{2}{3}$ fraction of the total population respectively.
For example, an honest majority within a set of nodes would indicate that more than half of the nodes are honest. An honest super-majority would similarly indicate the fraction of honest nodes to be strictly greater than $\frac{2}{3}$ of all the nodes within the set.

Let $\svalue_r$ denote the value for which the oracle network achieved a consensus for it to be considered as the representative value for a round $r$. We will use $\svalue_{r-1}$ to denote the value emitted by the oracle network in the previous round.

\subsubsection{Expected Representative Value with Byzantine Actors}
We consider two kinds of bad actors in the system: (i) Byzantine data sources, and (ii) Byzantine nodes.
We assume that Byzantine nodes and Byzantine data sources can collude in order to 
 (i) prevent the oracle network from reaching a consensus in a given
 round, or (ii) prevent the oracle network from achieving \cref{prop:wmean}.

In the presence of such Byzantine actors, even if a single dishonest observation gets considered for computation of the mean to compute \svalue, the Byzantine nodes can be successful in setting \svalue to deviate
arbitrarily from the true representative value equivalent to a mean\footnote{Byzantine oracle nodes are problematic particularly as data sources do not sign their inputs.}. 

Moreover, for the setting with $|N_t| \geq 3f_t+1$ nodes, a rushing adversary can suggest its input only after observing the honest parties' inputs. Therefore, we can only aim for the following weaker property: 

\begin{property}[Honest Bounded Value] \label{prop:hbounds} $H_{min} \leq \svalue \leq H_{max}$
 \end{property} 
 
 As discussed in the literature~\cite{mendes2013approx}, the agreed value is in the convex hull of the non-faulty nodes' inputs.

\subsection{Primitive: State Machine Replication}\label{sec:SMR}

In the form of state machine replication (SMR)~\cite{Schneider90} or blockchain, we employ a key distributed service for our system. An SMR service employs a set of replicas/nodes collectively running a deterministic service that implements an abstraction of a single, honest server, even when a subset of the servers turns Byzantine. In particular, an SMR protocol orders messages/transactions $tx$ from clients (in our case aggregators) into a totally ordered log that continues to grow.
We expect the SMR service to provide public verifiability.
Namely, there is a predefined Boolean function \command{Verify}; a replica or a
client outputs a log of transactions log $= [ tx_0, tx_1, \dots,tx_j ]$ if and only
if there is a publicly verifiable proof $\pi$ such that $ \command{Verify}(log, \pi) = 1$.

Formally, an SMR protocol~\cite{momose2021multi} then provides the following safety and liveness:
\begin{property}[Safety] \label{prop:smrsafety}
	If $[ tx_0, tx_1, \dots,tx_j]$ and $[ tx'_0, tx'_1, \dots,tx'_{j'}]$ are output by
two honest replicas or clients, then
$tx_i = tx'_i$
for all $i \leq \min(j,j')$.

\end{property}
\begin{property}[Liveness] \label{prop:smrliveness}
If a transaction $tx$ is input to at least one honest
replica, then every honest replica eventually outputs a log containing $tx$.

\end{property}

We assume that there is an SMR/blockchain service running in the background that an oracle service can employ. 
Oracle network nodes are assumed to employ a simple put/get interface to the SMR service. They employ $\postsmr{\cdot}$ to post some (threshold) signed information (or transaction) on the SMR chain. Upon collecting and processing ordered transactions on SMR, the nodes employ ``{\bf Upon} witnessing''  event handling to process the relevant messages. As communication links between oracle nodes and SMR service nodes are expected to be asynchronous, this put/get interface is expected to function completely asynchronously and provide guarantees that can be observed as an interpretation of SMR safety and liveness: (i) senders' messages appear on the blockchain eventually; (ii) different receivers observe messages at different points in time; (iii) however, all the nodes eventually observe messages in the exact same total order.

Notice that some recent works view blockchains as broadcast channels. In such cases, a sender node's message is expected to be delivered to all correct receiver nodes within a predefined number of blocks. If the receivers fail to deliver any message by the end of this period, the sender is treated as faulty. However, determining an appropriate time bound for any blockchain is challenging due to occasional network asynchrony over the Internet, as well as transaction reordering, frontrunning~\cite{274723}, and eclipse attacks~\cite{MarcusHG18}. Even if a pessimistic bound can be established, it may be significantly worse than the usual computation and communication time. For example, in practice, the lightning network based on Bitcoin uses a time bound of approximately one day.~\cite{AumayrMKM23} While newer blockchains with shorter block periods may be able to reduce the time-bound, the asynchronous primitive described above remains a better representation of modern blockchains.

\section{The Oracle Protocol}

\subsection{Data Feed Collection}
The first step of an oracle service involves  oracle nodes connecting with the data sources. As we assume that there are multiple data sources, some of them can be compromised and  most of them do not sign their data feeds. Our key goal is to ensure that the honest (oracle) node's input to aggregators is representative of the honest data sources.

Towards ensuring the correctness of the honest nodes' inputs, we expect it to retrieve feeds from multiple data sources such that the median of the received values is representative of the honest values; i.e., it is inside the $[\hmin,\hmax]$ range of the honest data sources. 

\ifasyncdstonode
The data feed collection works as shown in \cref{Alg:datafeed}

\begin{enumerate}
\item We expect that up to $f_d$ data sources may become Byzantine.
 
		Therefore, out of abundant caution, we mandate that every oracle node connects to $3f_d+1$ data sources.
	\item Every node sends a request to their assigned set of data sources $ADS$. (\crefrange{Li:dfinitstart}{Li:dfinitend})
	\item Whenever a value $v$ is received from a data source $ds$, the node stores in $obs[ds]$ . (\cref{Li:dfgetvalue}) If at least $2f_d+1$ values have been received, gather all the values received in $Obs$. {\cref{Li:dfreportmedian}}
	
\end{enumerate}
\begin{theorem}
[Termination of Data Feed][restate]\label{The:dfterminate} \cref{Alg:datafeed} terminates eventually.
\end{theorem}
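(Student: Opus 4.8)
The plan is to observe that the only potentially blocking instruction in \cref{Alg:datafeed} is the wait for enough data-source responses before the node assembles $Obs$; every other step (initialising the request set, broadcasting the requests to $ADS$, and recording a received value via $obs[ds]\gets v$) is a finite, non-blocking local action. Hence it suffices to show that the guard ``at least $2f_d+1$ values have been received'' is eventually satisfied at every honest node.

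First I would count the responsive data sources. By construction each honest node is assigned a set $ADS$ of $3f_d+1$ data sources, and at most $|BDS|\le f_d$ of all data sources are Byzantine; therefore at least $(3f_d+1)-f_d = 2f_d+1$ of the sources in $ADS$ are honest. By definition an honest data source is responsive — it neither stays silent nor refuses the query — so each of these (at least) $2f_d+1$ honest sources returns a value in response to the node's request. Then I would invoke the standard asynchronous-network fairness assumption already made in the model (messages between honest parties are eventually delivered), extended to the node--data-source links: for each honest source the node eventually receives the value and performs the update $obs[ds]\gets v$. After the last such delivery the number of received values is at least $2f_d+1$, the guard fires, the node gathers $Obs$, and the procedure returns. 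Since this holds for every honest node, \cref{Alg:datafeed} terminates.

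The main obstacle — really the only subtlety — is pinning down what ``a value received from $ds$'' means and making the argument robust to Byzantine sources that are not merely silent but may send arbitrary, malformed, or late data. The clean way to handle this is to let the guard count only well-formed responses and to note that the $2f_d+1$ honest responses already suffice on their own, so any additional or missing Byzantine responses can only make the guard fire sooner, never prevent it. One should also verify, by inspecting the pseudocode, that the algorithm never waits for responses from all $3f_d+1$ sources (which would be a liveness bug since up to $f_d$ may be permanently silent); the threshold is exactly $2f_d+1$, matching the honest count, so termination goes through.
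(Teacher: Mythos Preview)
Your proposal is correct and follows essentially the same approach as the paper's proof: count that among the $3f_d+1$ assigned data sources at most $f_d$ are Byzantine, so at least $2f_d+1$ honest sources eventually respond, which triggers the guard at \cref{Li:dfsupermajority} and causes the procedure to return. Your write-up is considerably more detailed than the paper's one-sentence argument (in particular, you make explicit the eventual-delivery assumption and the robustness to malformed Byzantine responses), but the underlying idea is identical.
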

\begin{proof}
    Since every node is listening to $3f_d+1$ data sources and at most $f_d$ data sources could be Byzantine, each node is guaranteed to receive values from $2f_d+1$ data sources (\cref{Li:dfsupermajority}) thus resulting in termination of data feed phase for every node.
\end{proof}
\else
In this direction, we make a key synchrony assumption about communication links between data sources and oracle nodes. Unlike communication links between oracles nodes, we assume that links connecting data sources and oracle nodes to be bounded-synchronous such that if a node does not hear back from the data source over the web API/socket in a time-bound $\dstimebound$, the node can assume that the source has become faulty/Byzantine.

In this bounded-synchronous communication setting, data feed collection works as shown in \cref{Alg:datafeed}.

\begin{enumerate}
	\item We expect that up to $f_d$ data sources may become Byzantine.
 
		Therefore, out of abundant caution, we mandate that every oracle node connects to $2f_d+1$ data sources \footnote{Our bounded-synchrony assumption for the source-to-node link is for simplicity. If these links behave more asynchronously, we can easily make the node contact $3f_d+1$ data sources and wait to hear back from at least $2f_d+1$ data sources to ensure that the honest nodes select a representative value.}.
	\item Every node sends a request to their assigned set of data sources $ADS$ and then starts the timer $\dstimebound$. (\crefrange{Li:dfinitstart}{Li:dfinitend})
	\item Whenever a value $v$ is received from a data source $ds$, the node stores in $obs[ds]$  . (\cref{Li:dfgetvalue})
	\item Upon {\em timeout} of $\dstimebound$ gather all the values received so far in $Obs$. (\cref{Li:dfreportmedian})
\end{enumerate}
\fi
\begin{theorem} \label{The:median} At the end of \Cref{Alg:datafeed}, $\hmin(Obs) \leq \median{Obs} \leq \hmax(Obs)$.
\end{theorem}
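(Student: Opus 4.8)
The plan is to split the argument into two parts: first, show that at the moment the timer \dstimebound{} fires, the multiset $Obs$ held by the node contains a \emph{strict} majority of honest observations; second, invoke the elementary fact that the median of any finite multiset in which honest values are a strict majority lies between the smallest and the largest honest value.

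For the first part I would use the bounded-synchrony assumption on the source-to-node links together with the fact that the node queries $2f_d+1$ data sources, of which at most $f_d$ are Byzantine. Every honest data source answers within \dstimebound{}, so by the timeout the node has recorded at least $(2f_d+1)-f_d = f_d+1$ honest values in $Obs$, while at most $f_d$ entries of $Obs$ are Byzantine. Writing $h = |\honest(Obs)|$ and $b = |\Byz(Obs)|$, this gives $h \ge f_d+1 > f_d \ge b$. In particular $Obs \neq \emptyset$, so $\hmin(Obs)$ and $\hmax(Obs)$ are well defined, and $h > |Obs|/2$. (If the asynchronous variant of the data-feed step of \cref{Alg:datafeed} is used instead, the node waits for $2f_d+1$ replies out of $3f_d+1$ sources, and the same conclusion $h \ge f_d+1$ holds.)

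For the second part, let $m = |Obs|$ and recall that the median $v = \median{Obs}$ satisfies $|\{x \in Obs : x \ge v\}| \ge \lceil m/2 \rceil$ and $|\{x \in Obs : x \le v\}| \ge \lceil m/2 \rceil$; for even $m = 2t$ with $v$ taken as the average of the $t$-th and $(t+1)$-th order statistics this follows because those two order statistics sandwich $v$. Suppose, for contradiction, that $v > \hmax(Obs)$. Then every honest value in $Obs$ is $\le \hmax(Obs) < v$, so only Byzantine entries can be $\ge v$, whence $|\{x \in Obs : x \ge v\}| \le b$. But $h > b$ forces $\lceil m/2 \rceil = \lceil (h+b)/2 \rceil \ge b+1 > b$, a contradiction. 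The case $v < \hmin(Obs)$ is symmetric, using instead the lower bound on $|\{x \in Obs : x \le v\}|$. Therefore $\hmin(Obs) \le \median{Obs} \le \hmax(Obs)$.

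I do not expect a genuine obstacle: the proof is short and essentially a counting argument. The only points that need care are (i) correctly invoking the synchrony assumption so that at least $f_d+1$ honest replies are guaranteed to be in $Obs$ by the timeout, and (ii) fixing the median convention for even-sized $Obs$ so that both counting inequalities above hold — that is the step most prone to an off-by-one slip.
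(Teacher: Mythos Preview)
Your proposal is correct and follows essentially the same approach as the paper: both establish that $Obs$ contains at least $f_d+1$ honest values and at most $f_d$ Byzantine ones via the bounded-synchrony assumption, and then argue by contradiction that if the median exceeded $\hmax(Obs)$ (resp.\ fell below $\hmin(Obs)$) there would have to be at least $f_d+1$ Byzantine values. Your version is more explicit about the median convention for even $|Obs|$ and the counting inequalities, but the underlying argument is the same.
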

\begin{proof}
	We will prove the theorem by contradiction. Without the loss of generality, let us assume that $\median{Obs} > \hmax(Obs)$. The total number of all the honest values received by a node is at least $f_d+1$ and the total number of all the Byzantine data sources is at most $f_d$ ($\Byz(Obs) \leq f_d$). Since all the honest data sources would report the correct value
  \ifasyncdstonode
 Since all the honest data sources would report the correct value,
 \else
 Since all the honest data sources would report the correct value
 within $\dstimebound$,
 \fi
 we would have $f_d+1 \leq |Obs|$. Since $\median{Obs} > \hmax(Obs)$ and $|\honest(Obs)| \geq f_d+1$, it must be the case that $|\Byz(Obs)| \geq  (f_d+1)$, a contradiction. One can similarly argue that $\median{Obs} < \hmin(Obs)$ is not possible.
\end{proof}

\ifasyncdstonode
\begin{algorithm}
	\caption{GetDataFeed($ADS$)}
\label{Alg:datafeed}
\algblockdefx[UPONBLOCK]{UPONSTART}{UPONEND}
[1]{\textbf{Upon} #1 \textbf{do} }{}

\begin{algorithmic}[1]
	\Require $\left(ADS \subseteq DS\right) \wedge \left(|ADS| = 3f_d+1\right)$ \;
	\UPONSTART{init} \label{Li:dfinitstart} \;
\State  $\forall_{ds \in ADS} obs[ds] \gets \bot$ \;
\State send the request to all $ds \in ADS$ \label{Li:dfinitend} \;
%\State start the timer $\dstimebound$ \label{Li:dfinitend} \;
%{$obs[ds] \gets v$}
\UPONEND 

\UPONSTART{receiving value $v$ from $ds$ }
\State {$obs[ds] \gets v$} \label{Li:dfgetvalue}
\If{$|\{x | x \in obs \wedge x \neq \bot\}| \geq 2f_d+1$} \label{Li:dfsupermajority}
\State \Return \,$Obs=\{obs[ds] | obs[ds] \neq \bot\}$ \label{Li:dfreportmedian}\;
\EndIf
\UPONEND

%\UPONSTART{${\mathit timeout}$ of $T_{ds}$}
%\For{each $obs[ds] \neq \bot $}
%\State add $w(ds)$ copies of $obs[ds]$ in multiset $WO$
%\EndFor
%\State \Return $Obs=\{obs[ds] | obs[ds] \neq \bot\}$ \label{Li:dfreportmedian}
%\UPONEND
\end{algorithmic}
\end{algorithm}
\else
\begin{algorithm}
	\caption{GetDataFeed($ADS$)}
\label{Alg:datafeed}
\algblockdefx[UPONBLOCK]{UPONSTART}{UPONEND}
[1]{\textbf{Upon} #1 \textbf{do} }{}

\begin{algorithmic}[1]
	\Require $\left(ADS \subseteq DS\right) \wedge \left(|ADS| = 2f_d+1\right)$ \;
	\UPONSTART{init} \label{Li:dfinitstart} \;
\State  $\forall_{ds \in ADS} obs[ds] \gets \bot$ \;
\State send the request to all $ds \in ADS$ \;
\State start the timer $\dstimebound$ \label{Li:dfinitend} \;
%{$obs[ds] \gets v$}
\UPONEND 

\UPONSTART{receiving value $v$ from $ds$ \textbf{and} $\dstimebound > 0$}
\State {$obs[ds] \gets v$} \label{Li:dfgetvalue}
\UPONEND

\UPONSTART{${\mathit timeout}$ of $T_{ds}$}
%\For{each $obs[ds] \neq \bot $}
%\State add $w(ds)$ copies of $obs[ds]$ in multiset $WO$
%\EndFor
\State \Return $Obs=\{obs[ds] | obs[ds] \neq \bot\}$ \label{Li:dfreportmedian}
\UPONEND
\end{algorithmic}
\end{algorithm}
\fi

\subsection{Distributed Oracle Problem Definition}\label{sec:DORADef}
Once we ensure that every honest oracle node has produced a correct/representative value as its input, the oracle problem becomes a bit simpler.
Since honest observations may still differ, therefore, we need to make sure that the honest nodes agree on exactly the same value, which is again  representative of the honest nodes' inputs. We observe that this problem is related to the Byzantine agreement (BA)~\cite{jacm88} and Approximate agreement~\cite{jacm86dolev,mendes2013approx} problems from the literature on distributed systems. While the expected agreement and termination properties are exactly the same as for BA, the validity property coincides with the typical Approximate agreement definition. We call this problem a {\em distributed oracle agreement} (DORA) problem.

\begin{definition}[Distributed Oracle Agreement---DORA]\label{def:dora}
    A distributed oracle agreement (DORA) protocol among $n$ nodes $\{p_1,p_2,\dots,p_n\}$ with each node having input $v_i$ guarantees the following properties:
    \begin{property}[Termination] \label{prop:doratermination}
 All honest nodes eventually decide on some value.

    \end{property}
    \begin{property} [Agreement] \label{prop:doraagreement}
The output value \svalue for all honest nodes is the same.

    \end{property}
    \begin{property}[(Min-max) Validity] \label{prop:doravalidity}
	    The output value is in the convex hull of the honest nodes' inputs. For scalar inputs, 
        this coincides with  \cref{prop:hbounds}: $H_{min} \leq \svalue \leq H_{max}$

    \end{property}
    \end{definition}

Similar to the BA problem, it is easy to observe that DORA also requires an honest super majority. It is however interesting to observe that this bound persists even when the oracle nodes have access to the SMR/blockchain service defined in \cref{sec:SMR}. Unlike a typical broadcast channel, this  SMR channel is asynchronous to different receiving nodes. Therefore, it is not possible to differentiate between slow nodes and crashed nodes. The protocol needs to make progress with only $n-f$ inputs, where $f$ out of $n$ nodes are Byzantine. Access to the SMR service is still helpful as it already overcomes the FLP impossibility~\cite{FLP85}. We can develop protocols for BA and DORA in a purely asynchronous manner without requiring any distributed randomness (such as common coins)~\cite{CommonCoin}. 

The requirement of an honest super-majority for DORA can be a scalability issue as the number of oracle nodes increases. Based on our analysis of real-world data, for optimistic scenarios, we overcome this issue by making a practical assumption on the input values. In particular, if we assume that inputs from all honest nodes form a coherent cluster within a reasonably small agreement distance $\agreedist$, then we can solve \dora requiring only an honest majority instead of an honest super majority. 
We call this new problem  \doracc which may offer a slightly weaker validity property stated below:
\begin{definition}[Distributed Oracle Agreement with Coherent Cluster (\doracc)]
	A \doracc  protocol among $n$ nodes $\{p_1,p_2,\dots,p_n\}$ with each node having input $v_i$ such that these input values form a coherent cluster for a distance $\agreedist$ guarantees the following property in addition to Termination(\cref{prop:doratermination}) and Agreement(\cref{prop:doraagreement}) properties.
	\begin{property}[Approximate (Min-max) Validity.] \label{prop:doraccvalidity}  The output \svalue is within the interval $[\hmin-\agreedist,\hmax+\agreedist]$.
	\end{property}
\end{definition}

\subsection{DORA with Coherent Cluster Protocol \label{Se:protocol} }
\doracc  only needs an honest majority, while we expect the tribe to offer honest nodes with a supermajority.
We assume all the nodes to know \agreedist a priori, which is a parameter to the protocol.
In an optimistic scenario, this allows us to only employ a subset of nodes within the tribe.
We divide the tribe  $N_t$ into multiple mutually exclusive clans of size $|N_c|$, where $N_c$ denotes the set of nodes belonging to a clan $c$. While the number of Byzantine nodes within $N_t$ is restricted to $f_t$, we choose $|N_c|$ such that the number of Byzantine nodes are at most $f_c = \frac{|N_c|-1}{2}$ with very high probability. Each such clan of size $|N_c|$ can be given the responsibility to emit \svalue-values for  different variables. For simplicity, we only focus on a single variable \commodity in this paper. The process, however, can be replicated for multiple variables. Note that \agreedist could be different for different variables.
% \sjcomment{do we even want to say that there are multiple variables to be tracked?}
 %in a set $\mathcal{T}_c \subset \mathcal{T}$ such that  $|\mathcal{T}_c|=  \frac{|\mathcal{T}||N_c|}{|N_t|}$. 

There could be times when the inputs from all honest nodes may not form a coherent cluster within the distance $\agreedist$. For such a scenario, we aim at first identifying this volatility in a distributed fashion and then securely switching to the fallback protocol for the \dora instance over the entire tribe. Now, when we run the \dora instance over the entire tribe, we ensure that the output satisfies \cref{def:dora} and is representative of the existing market conditions.

Additionally, we also uniformly randomly select a {\em family} $\aggset$ of nodes from the tribe  such that at least one of them is honest with high probability. These  nodes are designated as aggregators as they are supposed to securely collect information from the clan nodes and post it on the \smr. Note that these aggregators are employed solely to reduce the total number of interactions with the blockchain. Since the nodes sign their inputs, there are no safety issues regarding a Byzantine aggregator forging them. If there is only a single aggregator, there is almost $\frac{1}{3}$ probability of it being Byzantine, which may require an aggregator change to ensure the progress of the protocol. Our multi-aggregator model ensures progress without requiring any aggregator changes, thus reducing latency. However, this comes with added communication complexity. 

\Omit{
\begin{figure}
\centering
\includegraphics[scale=0.7]{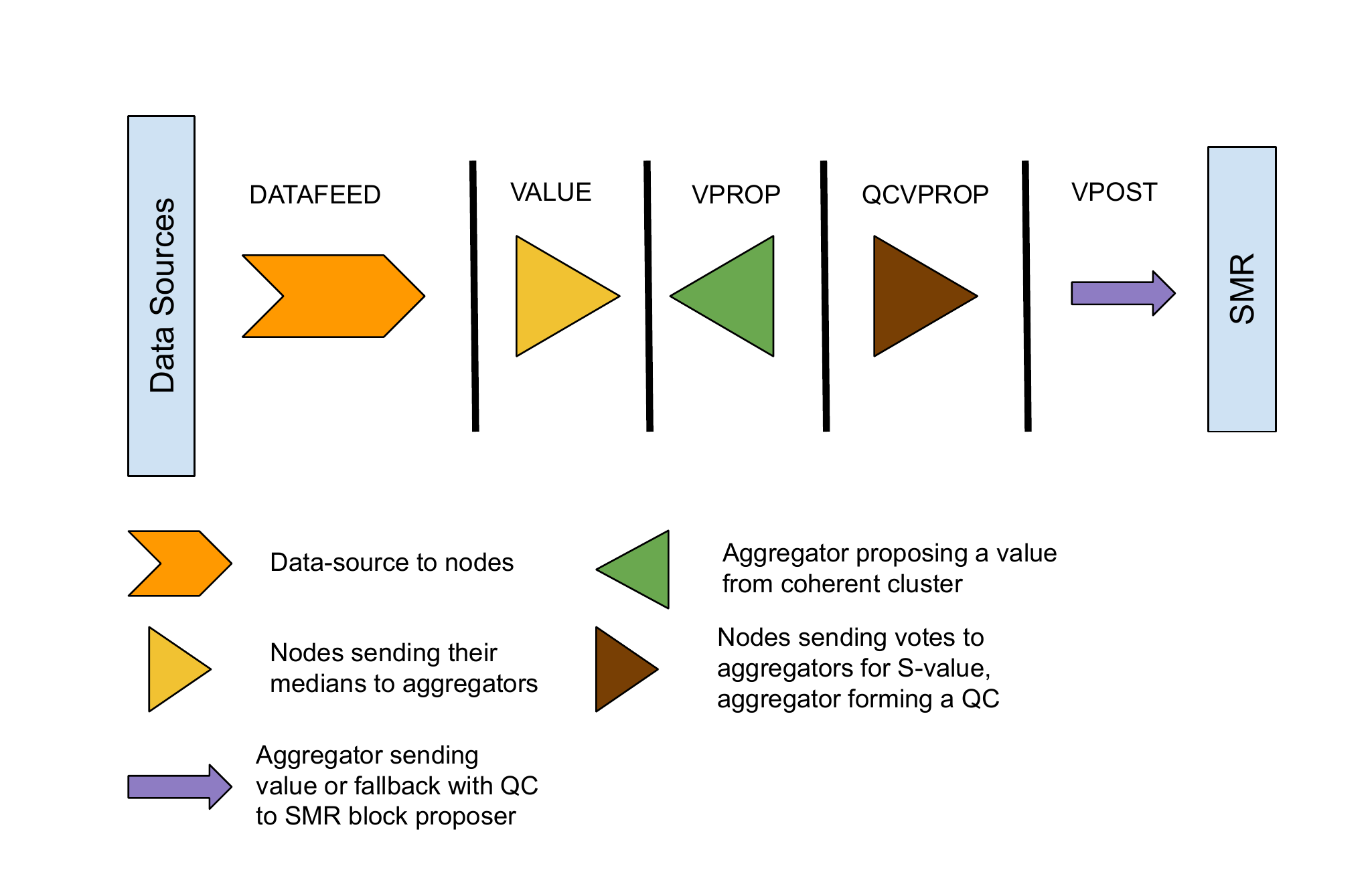}
\caption{Working of \cref{Alg:default}}
\label{Fi:flowdefault}
\end{figure}

\begin{algorithm}
	\caption{$ComputeSValue(p_i,N_c,\aggset,r)$}
\label{Alg:default}

\algblockdefx[UPONBLOCK]{UPONSTART}{UPONEND}
[1]{\textbf{Upon} #1 \textbf{do} }{}

\begin{algorithmic}[1]

	\State	\textbf{input:} $r$ is the round identifier, $N_c$ is the set identifying nodes in the clan , \aggset is the set of aggregators , $p_i$ is the node identifier
\UPONSTART{init}
\State $ADS \gets$ $2f_d+1$ uniformly randomly assigned data-sources from $DS$ \;
\State $O_{p_i} \gets GetDataFeed(ADS)$ \label{Li:dpgetdatafeed} \;
\State \send $\Value(\median{O_{p_i}},r)_i$ to all the aggregators \label{Li:dpsendmedian}
\If {$p_i \in \aggset$}
\State $\forall_{p_j \in N_c} obs[p_j] \gets \bot$ \label{Li:dpagginit}
\EndIf
\UPONEND

\UPONSTART{receiving $\Value(v,r)_j$ from a node $p_j$ \textbf{and} $p_i \in \aggset$} \label{Li:dpformccstart}
\State $obs[p_j] \gets v$
\If {$\exists_{CC \subseteq obs}: |CC|\geq f_c+1$}
\State $\mu \gets mean(CC)$ \label{Li:dpmu}
\State \send $\Vprop(CC,\mu,r)_i$ to all nodes in $N_c$ \label{Li:dpsendmu}
\EndIf \label{Li:dpformccend}
\UPONEND

\UPONSTART{receiving $\Vprop(CC,\mu,r)_j$ from $p_j$ \textbf{and} $p_j \in \aggset$ }
\If {$Validate(\Vprop(CC,\mu,r)_i)=\true$} \label{Li:dpvalidatecc}
\State send $\Votevp(CC,\mu,r)_i$ to $p_j$ \label{Li:dpapprovemu}
\EndIf
\UPONEND

\UPONSTART{receiving  $\Votevp(CC,\mu,r)_j$ from $p_j$ \textbf{and} $p_i \in \aggset$ }% \textbf{and} $isaggregator=\true$}
\If {\quorum is formed on $\Votevp(CC,\mu,r)$ } \Comment{Quorum with $f_c+1$ votes}
\State $\postsmr{\Vpost(CC,\mu,r,\quorum)}$  \label{Li:dpaggpostsmr}
\EndIf
\UPONEND

\UPONSTART{witnessing an $\Vpost(CC,\mu,r,\quorum)$ on \smr} \label{Li:dpwitness}
\State $\svalue_r \gets \mu$
\State \Return
\UPONEND
\end{algorithmic}\end{algorithm}
\Cref{Alg:default} offers a pseudo code for executing \dora-CC in a clan, under the assumption that any two honest observations are within the agreement distance $\agreedist$ of each other.
To participate in a round to compute the \svalue, each node performs operations in the
following sequence:

\begin{enumerate}
	\item An honest node $p_i$ follows a fairly simple process in the protocol. It gathers values from at least $2f_d+1$ randomly assigned data sources (\cref{Li:dpgetdatafeed}), computes the median of all the values gathered, and sends it to all the aggregators (\cref{Li:dpsendmedian}).
	\item An honest aggregator would keep receiving signed inputs from the nodes of the clan (\cref{Li:dpformccstart}). Under the assumption that honest observations are within $\agreedist$ from one another, we know that eventually the aggregator would receive values such that a coherent cluster of size $f_c+1$ can be formed, since out of $2f_c+1$ nodes of the clan, there are at least $f_c+1$ honest nodes. The aggregator computes the mean from all the values received (\cref{Li:dpmu}) and sends it back to all the nodes with its signature for voting (\cref{Li:dpsendmu}). In other words, it sends a proposal for a subset of observations (which are within $\agreedist$ from each other) and its arithmetic mean to be considered for $\svalue_r$. 
	\item Upon receiving the proposal from an aggregator, an honest node performs the following set of validations: (i) it checks if $CC$ indeed forms a coherent cluster of size $f_c+1$ within the agreement distance $\agreedist$, (ii) validates that $\mu$ is indeed the arithmetic mean of observations in $CC$, and (iii) validates signatures of the aggregator and all the nodes in $CC$ (\cref{Li:dpvalidatecc}). Here, we overload the term $CC$ to mean not only the set of values that forms a cluster, but also a set of originally signed $\Value$ messages sent by nodes, the values of which form the coherent cluster.
		\item If the validation succeeds, it sends its signed vote back to the aggregator, approving the proposal for $\mu$ to be considered as $\svalue_r$ for the current round $r$ (\cref{Li:dpapprovemu}).
		\item Once the aggregator has received $f_c+1$  votes on its proposal to prepare a quorum certificate \quorum, it posts $\Vpost(CC,\mu,r,\quorum)$  on \smr (\cref{Li:dpaggpostsmr}).
		\item The node would conclude the current round $r$ for \dora with $\mu$ as $\svalue_r$ upon witnessing the first $\Vpost(CC,\mu,r,\quorum)$ for the current round on \smr (\cref{Li:dpwitness}).
\end{enumerate}

%\subsubsection{Leader level processing \label{Se:lclulster}}

Usage of \smr ensures that \cref{Alg:default} possesses \cref{prop:doraagreement}.

\begin{theorem}[Approximate Validity] \label{The:deltadistance} The protocol defined in \cref{Alg:defaultfallback} if it posts  $\Vpost(CC,\mu,r,\quorum)$ on \smr then $\mu$  must be within the interval $(\hmin-\agreedist,\hmax+\agreedist)$.
\end{theorem}
\begin{proof}
	There is at least one honest observation within the $f_c+1$ observations belonging to the cluster $CC$.
	By the coherent cluster definition, all other $f_c$ observations, even if reported by Byzantine nodes, have to agree with the honest observation mentioned earlier to form the cluster $CC$. Therefore, no observation can exceed $\hmax+\agreedist$. Similarly, one can argue that no observation can be less than $\hmin-\agreedist$. Therefore, the mean $\mu$ computed by an aggregator must lie within the interval $(\hmin-\agreedist,\hmax+\agreedist)$.
\end{proof}

\begin{theorem}[\dora-CC Termination] \label{The:doracctermination} \cref{Alg:default} eventually terminates for all honest nodes if (i) there is at least one honest aggregator, (ii) there is an honest majority in the clan and (iii) all honest observations are within $\agreedist$ from one another.
\end{theorem}
\begin{proof}
	Operation $GetDataFeed()$ is bounded by time $\dstimebound$. Due to assumptions (i), (ii), and (iii), we would have an honest aggregator, which would eventually be able to form a coherent cluster with $\agreedist$ of size $f_c+1$. Since there is an honest majority, the proposal of the honest aggregator would be eventually signed by at least $f_c+1$ nodes. This would allow the aggregator to post $\Vpost(CC,\mu,r,\quorum)$ on \smr, ensuring that there would be at least one $\Vpost$ message on \smr for round $r$. All the honest nodes would witness some $\Vpost(CC,\mu,r,\quorum)$ by some aggregator and would conclude.
\end{proof}

\subsubsection{Analysis of Communication Complexity \label{Se:doracomm}}

Let us analyze how many messages and bits need to be transmitted by \cref{Alg:default}. We shall use  $n_c=|N_c|$ to denote the size of the clan and $n_a = |\aggset|$ to denote the size of the aggregator set.

To obtain data from the data sources, we need $(2f_d+1)n_c$ messages. Nodes would then send $n_an_c$ $\Value$ messages to aggregators. After aggregation, the aggregators would send $n_cn_a$ $\Vprop$ messages. The nodes would generate $n_cn_a$ $\Votevp$ messages to be sent to aggregators which would subsequently send at most $n_a$ messages to the \smr. This would result in the total number of messages being transmitted by \cref{Alg:default} to $(2f_d+1)n_c + 3n_cn_a + n_a$. Considering $n_cn_a$ to be the dominating term, the message complexity would be $O(n_cn_a)$. Let the total number of variables for which we need to run \dora be denoted as $n_{\commodity}$. If we run \dora for different $\commodity$ independently, then the message complexity would increase to $O(n_cn_an_{\commodity})$. However, one can batch \dora messages for different \commodity together in which case the message complexity would again reduce to $O(n_cn_a)$.

	For analyzing complexity in terms of the number of bits transmitted, we need to take into account the size of each message. Let the length of the data-feed message and  $\Value$ message and hash of any messages be upper-bounded by $k$. We assume that within this length we can store all the information needed for the protocol such as a round identifier, a node identifier, a data-source identifier, a \commodity value etc. The data feed stage would require $O(k(2f_d+1)n_c)$ bits. Let us assume $\lambda$ to be the length of a signature. There would be $O((k+\lambda)n_cn_a)$ bits required for the nodes to send $\Value$ messages to aggregators. Since the aggregator sends back not only the $\mu$ but also the set of original $\Value$ messages forming a coherent cluster, the number of bits required for $\Vprop$ messages would be $O(((k+\lambda)n_c + k+ \lambda)n_c  n_a)$. The nodes send their votes on a value proposal, thus requiring $O((k+\lambda) n_cn_a)$ bits for $\Votevp$ messages. We assume that the nodes would sign the hash of the $\Vprop$ message received earlier and return it as approval.   The aggregators then form a \quorum and send it to the \smr, which would require $O((k+\lambda n_c)n_a)$ bits of transmission since each \quorum would have its size in proportion to $n_c$. Therefore, the communication complexity in bits would be $O((k+\lambda)n^2_cn_a)$ for \cref{Alg:default}. For $n_{\commodity}$ many variables it would be $O((k+\lambda)n^2_cn_an_{\commodity})$.

	While for data such as temperature sensors, we expect the inputs to hold coherent cluster property, we expect (crypto-) currency markets to fluctuate rapidly at times. While we may mitigate this problem by setting the $\agreedist$ value generously, it might increase the measurement/observation error. Therefore, we instead propose a fallback option. In particular, we will identify that the cluster coherence is not achievable and fall back to a larger set of nodes with an honest super majority to perform \dora.

}
%A aggregator in the clan $N_c$ of nodes, where $|N_c| = 2f_c+1$.
%\subsection{\dora for Volatile Data Feeds}
%However, we expect to see occasional volatility among data feeds. In particular,  honest nodes' observations may not form a coherent cluster. 

\begin{algorithm}
	\caption{$Compute$\svalue$(p_i,N_c,\aggset,r)$}
\label{Alg:defaultfallback}

\algblockdefx[UPONBLOCK]{UPONSTART}{UPONEND}
[1]{\textbf{Upon} #1 \textbf{do} }{}

\begin{small}
\begin{algorithmic}[1]

	\State	\textbf{input:} $r$ is the round identifier, $N_c$ is the set if nodes in the clan, $\aggset$ is the set of aggregators, $p_i$ is the ID of this node
\UPONSTART{init}
%\State $fcounter \gets 0$
\State $ADS \gets$ $2f_d+1$ uniformly randomly assigned data-sources from $DS$
\State $O_{p_i} \gets GetDataFeed(ADS)$
\State \send  $\Value(\median{O_{p_i}},r)_i$ to all nodes in \aggset
\State \hldiff{start timer $\fallbacktimer$} \label{Li:dpvinitftimer}
\If {$p_i \in \aggset$}
$\forall_{p_j \in N_c} obs[p_j] \gets \bot$
\EndIf
\UPONEND

\UPONSTART{receiving $\Value(v,r)_j$ from a node $p_j$ \textbf{and} $p_i \in \aggset$}
\State $obs[p_j] \gets v$
\If {$\exists_{CC \subseteq obs} |CC|\geq f_c+1$}
\State $\mu \gets mean(CC)$ \label{Li:dpvmu}
%\If{\hldiff{ $\svalue'-c_h\sigma_h \leq \mu \leq \svalue'+c_h\sigma_h$}} \label{Li:dpvhccheck}
\State \send $\Vprop(CC,\mu,r)_i$ to all nodes in $N_c$ \label{Li:dpvsendvprop}
%\Else
%\State \hldiff{\send $\FPprop(fallback,CC,\mu,r)_i$ to all nodes in $N_c$} \label{Li:dpvfallbackproof}
%\EndIf
\EndIf
\UPONEND

\UPONSTART{receiving $\Vprop(CC,\mu,r)_j$ from $p_j \in \aggset$ } \label{Li:dpvsignedclusterstart}
\If {$Validate(\Vprop(CC,\mu,r)_j)=\true$}
 \send  $\Votevp(CC,\mu,r)_i$ to $p_j$ \label{Li:dpvsignedclusterend}
\EndIf
\UPONEND

\UPONSTART{receiving  $\Votevp(CC,\mu,r)_j$ from $p_j$ \textbf{and} $p_i \in \aggset$} 
\If {\quorum is formed  on $\Votevp(CC,\mu,r)$ } \Comment{Quorum with $f_c+1$ votes}
\State $\postsmr{\Vpost(CC,\mu,r,\quorum)}$ \label{Li:dpvpostmusmr}
\EndIf
\UPONEND
\UPONSTART{witnessing the {\bf first} $\Vpost(CC,\mu,r,\quorum)$ on \smr} \label{Li:dpvwitnessmu}
\State $\svalue_r \gets \mu$
\State \Return
\UPONEND

\UPONSTART{\hldiff{timeout on $\fallbacktimer$}}
\State \hldiff{\send   $\Voteft(fallback,r)_i$ to all nodes in $\aggset$} \label{Li:dpvnodefallback}
%\State \hldiff{$fcounter \gets fcounter+1$}
\UPONEND

\UPONSTART{\hldiff{receiving $\Voteft(fallback,r)_j$ from $p_j \in N_c$ \textbf{and} $p_i \in \aggset$}} \label{Li:dpvreceivefallback}
%\State \hldiff{$fcounter \gets fcounter+1$} \label{Li:dpvfallbackcounter}
%\State \hldiff{wait for $\fallbacktimer$ to timeout} \label{Li:dpvfallbacktimer}
\If{\hldiff{$\quorum$ is formed on $\Voteft(fallback,r)$ \textbf{and} $\fallbacktimer$ for $p_i$ has already timed out }}
\State \hldiff{$\postsmr{\FTpost(fallback,r,\quorum)}$} \label{Li:dpvtimeoutfallbacksmr}
\EndIf
\UPONEND

%\UPONSTART{\hldiff{witnessing $\FTpost(fallback,r,\quorum)$ or $\FPpost(fallback,CC,v,r,\quorum)$ on \smr}} \label{Li:dpvwitnessfb}
\UPONSTART{\hldiff{witnessing $\FTpost(fallback,r,\quorum)$ on \smr}} \label{Li:dpvwitnessfb}
\State \hldiff{switch to Fallback protocol}
\UPONEND
\end{algorithmic}
\end{small}
\end{algorithm}
Pseudocode for \dora-CC in the presence of volatile data feeds is given in \cref{Alg:defaultfallback}.
It proceeds as follows:
\begin{enumerate}
	\item Using \cref{Alg:datafeed}, every node would gather data from $2f_d+1$ uniformly randomly assigned data sources, compute the median from all the values received and send the median to all the aggregators as a $\Value$ message. Each node starts a timer with $\fallbacktimer$ (\cref{Li:dpvinitftimer}) .
	\item An aggregator waits until a coherent cluster of size $f_c+1$ is formed. Once the cluster is formed, it computes the mean (\cref{Li:dpvmu}). The aggregator then sends the set of $\Value$ messages that formed a cluster along with the mean as a $\Vprop$ message to all the clan nodes (\cref{Li:dpvsendvprop}). This message would convey that the aggregator proposes the $\mu$ to be the $\svalue_r$.
	\item Upon receiving a proposal $\Vprop$ with $CC$ and $\mu$ from an aggregator, the node performs a validation. The node would validate that (i) $CC$ contains signed messages from the nodes of the clan, (ii) the values in $CC$ forms a coherent cluster within $\agreedist$, and (iii) that $\mu$ is indeed the arithmetic mean of the values in $CC$. If the validation is successful, it sends its signed vote $\Votevp(CC,\mu,r)$ to the aggregator (\crefrange{Li:dpvsignedclusterstart}{Li:dpvsignedclusterend}).

	\item In an optimistic scenario, the aggregator would receive $f_c+1$ votes of the kind $\Votevp(CC,\mu,r)$ approving the proposal to allow it to form a quorum,  prepare quorum certificate $\quorum$ and post $\Vpost(CC,\mu,r,\quorum)$ on the \smr (\cref{Li:dpvpostmusmr}). The nodes in turn would witness $\Vpost(CC,\mu,r,\quorum)$ on the \smr, agree on $\mu$ as $\svalue_r$ and conclude the current round (\cref{Li:dpvwitnessmu}).

	\item In an unusual scenario, a coherent cluster of size $f_c+1$ can not be formed by any of the aggregators. This could happen either due to extreme volatility during data feed, or due to network asynchrony and high network delays.
		In such a case,  the nodes would timeout on $\fallbacktimer$ and vote for fallback $\Voteft$ to all the aggregators  (\cref{Li:dpvnodefallback}). This can happen due to  high volatility during the data feed window $\dstimebound$ when the honest observations can not form a coherent cluster of size $f_c+1$.
		 In that case, the aggregators may receive fallback votes $\Voteft$ from multiple nodes (\cref{Li:dpvreceivefallback}). If an aggregator  has gathered $f_c+1$ votes for the fallback, it posts the fallback proposal with a quorum certificate on \smr (\cref{Li:dpvtimeoutfallbacksmr}). In a theoretical asynchronous model with the adversary having the ability to delay messages arbitrarily, the protocol would always fallback. However, we make a practical assumption that such selective delay of all honest nodes' messages may be uncommon, therefore, allowing us to exploit the proximity of the values that honest nodes emit.
	\item A node witnessing a fallback message  $\FTpost$ (due to timeout of $\fallbacktimer$) on \smr would switch to the fallback protocol mentioned in \cref{Alg:fallbackprotocol}.
\end{enumerate}

\begin{theorem}[Approximate Validity] \label{The:deltadistance} The protocol defined in \cref{Alg:defaultfallback} if it posts  $\Vpost(CC,\mu,r,\quorum)$ on \smr then $\mu$  must be within the interval $(\hmin-\agreedist,\hmax+\agreedist)$.
\end{theorem}
\begin{proof}
	There is at least one honest observation within the $f_c+1$ observations belonging to the cluster $CC$.
	By definition of a coherent cluster, all other $f_c$ observations, even if reported by Byzantine nodes, have to agree with the honest observation mentioned earlier to form the cluster $CC$. Therefore, no observation can exceed $\hmax+\agreedist$. Similarly, one can argue that no observation can be less than $\hmin-\agreedist$. Therefore, the mean $\mu$ computed by an aggregator must lie within the interval $(\hmin-\agreedist,\hmax+\agreedist)$.
\end{proof}
%\subsubsection{Fallback Protocol \label{Se:fallback}}
\begin{algorithm}
	\caption{$Fallback$\svalue$(p_i,r,N_t,\aggset)$}
	\label{Alg:fallbackprotocol}
\algblockdefx[UPONBLOCK]{UPONSTART}{UPONEND}
[1]{\textbf{Upon} #1 \textbf{do} }{}

\begin{algorithmic}[1]

	\State	\textbf{input:} $p_i$ is the unique id of this node, $r$ is the round identifier, $N_t$ is the set of nodes in the tribe, $\aggset$ is the set of aggregators
	\UPONSTART{init} \label{Li:fpinitstart}
\State $ADS \gets$ $2f_d+1$ uniformly randomly assigned data-sources from $DS$
\State $O_{p_i} \gets GetDataFeed(ADS)$
\State \send $\Value(\median{O_{p_i}},r)_i$ to all the aggregators (nodes in \aggset)
\If {$p_i \in \aggset$}
\State $\forall_{p_j \in N_c} obs[p_j] \gets \bot$
\EndIf \label{Li:fpinitend}
\UPONEND

\UPONSTART{receiving $\Value(v,r)_j$ from  $p_j$ \textbf{and} $p_i \in \aggset$ }
\State $obs[p_j] \gets v$
\State $O \gets  \{obs[p_j] | p_j \in N_t \wedge obs[p_j] \neq \bot \}$
\If {$|O| \geq 2f_t+1 $} \label{Li:fpthreshold}
\State \send $\Vprop(O,\median{O},r)_i$ to all nodes in $N_t$ \label{Li:fpsendmedian}
\EndIf
\UPONEND

\UPONSTART{receiving $\Vprop(O,v,r)_j$ \textbf{and} $p_j \in \aggset$}
\If {$Validate(\Vprop(O,v,r)_j)=\true$} \label{Li:fpvalidatemedian}
\State \send signed vote  $\Votevp(O,v,r)_i$ to $p_j$ \label{Li:fpsignedmedian}
\EndIf
\UPONEND

\UPONSTART{receiving  $\Votevp(O,\median{O},r)_j$ from $p_j$ in $N_t$ \textbf{and} $p_i \in \aggset$} 
\If{\quorum is formed on $\Votevp(O,\median{O},r)$} \Comment{Quorum with $2f_t+1$ votes}
\State $\postsmr{\Vpost(O,\median{O},r,\quorum)}$   \label{Li:fppostsmr}
\EndIf
\UPONEND

\UPONSTART{witnessing the {\bf first} $\Vpost(O,v,r,\quorum)$ on \smr} \label{Li:fpwitnessvpost}
\State $\svalue_r \gets v$ \label{Li:fpsvaluemedian}
\State \Return
\UPONEND
\end{algorithmic}
\end{algorithm}

The fallback protocol where the entire tribe $N_t$ participates is shown in \cref{Alg:fallbackprotocol}. Note that we assume that at most $f_t \leq \frac{|N_t|-1}{3}$ many nodes in the tribe may exhibit Byzantine behaviour.
\begin{enumerate}
	\item All the nodes gather observations from their assigned data sources and then send  corresponding medians to the aggregators (\crefrange{Li:fpinitstart}{Li:fpinitend}). Note that even the clan nodes that triggered the fallback gather fresh data along with the rest of the tribe. 
	\item When an aggregator receives a $\Value$ from a node, it checks whether it has received inputs from $2f_t+1$  nodes (\cref{Li:fpthreshold}). Since at most $f_t$ out of total $|N_t|=3f_t+1$ nodes may be Byzantine, by gathering inputs from $2f_t+1$ votes, the aggregator is assured that within this set of inputs, there is an honest majority. From these inputs, it calculates its median and sends $\Vprop(O,\median{O},r)$  to all the nodes in the tribe $N_t$ for $\median{O}$ to be considered as $\svalue_r$ for the current round $r$ (\cref{Li:fpsendmedian}).
	\item A node receiving a proposal $\Vprop(O,v,r)$ from an aggregator validates that (i) observations in $O$ are signed by nodes in $N_t$, (ii) the value $v$ is indeed the median of $O$, and (iii) the message is indeed from an aggregator in \aggset (\cref{Li:fpvalidatemedian}). It sends back its vote $\Votevp(O,\median{O},r)$ to the same aggregator  (\cref{Li:fpsignedmedian}).
	\item An aggregator, upon receiving $2f_t+1$  votes, prepares the quorum certificate \quorum and  posts $\Vpost(O,\median{O},r,\quorum)$  on \smr  (\cref{Li:fppostsmr}). 
	\item The nodes conclude round $r$ after reaching consensus on $v$ as $\svalue_r$  when they witness $\Vpost(O,v,r,\quorum)$ on \smr (\cref{Li:fpsvaluemedian}).
	\end{enumerate}

It is evident from~\cref{The:median} that the $\svalue_r$ reported by the fallback protocol will always be inside the interval $[\hmin,\hmax]$. Due to multiple aggregators, it is possible that an $\FTpost$ is posted on \smr, but a $\Vpost$ by another aggregator emerging from ~\cref{Alg:defaultfallback} is delayed in  reaching the \smr. In this case, the nodes would switch to the fallback protocol and then witness the delayed $\Vpost$ message. In that case, the $\svalue$ would be within $(\hmin-\agreedist,\hmax+\agreedist)$ as defined in \cref{prop:doraccvalidity}. The nodes would only consume the $\svalue$ from the first $\Vpost$ message on \smr for any given round.

\begin{theorem}[Fallback Termination] \label{The:fallbacktermination}  \cref{Alg:fallbackprotocol} eventually terminates for all honest nodes if (i) $N_t$ has an honest super majority and (ii) $\aggset$ has at least one honest aggregator.
\end{theorem}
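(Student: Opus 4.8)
The plan is to follow a single honest aggregator through \Cref{Alg:fallbackprotocol} all the way to a $\postsmr{}$ call, and then invoke SMR liveness to conclude for every honest node. (Implicit in the hypotheses is that every honest tribe node has entered the fallback; this itself follows from applying \cref{prop:smrliveness} to the $\FTpost$ message whose witnessing triggered the switch.) First I would show that every honest node eventually broadcasts its $\Value$ message: the call to $GetDataFeed$ returns (it is bounded by $\dstimebound$ under the bounded-synchrony assumption on data-source links, and by \Cref{The:median} the returned median lies in $[\hmin,\hmax]$), so each of the at least $2f_t+1$ honest nodes of $N_t$ sends $\Value(\median{O_{p_i}},r)_i$ to all aggregators. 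Since hypothesis (i) (honest super majority in $N_t$, i.e. $f_t \le (|N_t|-1)/3$) guarantees at least $2f_t+1$ honest nodes, and links between honest parties do not drop messages, the honest aggregator $p_a$ promised by hypothesis (ii) eventually has $|O|\ge 2f_t+1$; the guard on \cref{Li:fpthreshold} fires and $p_a$ sends $\Vprop(O,\median{O},r)_a$ to all of $N_t$.

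Next I would argue that this proposal is accepted by every honest node. By construction $O$ contains genuine signed $\Value$ messages from tribe nodes, $\median{O}$ is the honest median of $O$, and $p_a\in\aggset$; hence $Validate(\Vprop(O,\median{O},r)_a)=\true$ at \cref{Li:fpvalidatemedian} for every honest recipient, so each of the $\ge 2f_t+1$ honest nodes returns $\Votevp(O,\median{O},r)$ to $p_a$. Again using reliable honest-to-honest delivery, $p_a$ eventually collects $2f_t+1$ such votes, forms the quorum certificate $\quorum$ (note $2f_t+1 = |N_t|-f_t$, matching the $(n,n-f)$ threshold-signature setup assumed for quorum certificates), and executes $\postsmr{\Vpost(O,\median{O},r,\quorum)}$ at \cref{Li:fppostsmr}.

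Finally, by SMR Liveness (\cref{prop:smrliveness}), since $\Vpost(O,\median{O},r,\quorum)$ has been input to an honest client, every honest node eventually outputs a log containing it; in particular every honest node eventually witnesses some \emph{first} $\Vpost(\cdot,\cdot,r,\cdot)$ for round $r$ on \smr (this one, or another aggregator's, which by public verifiability also carries a valid quorum certificate over the tribe), sets $\svalue_r$ on \cref{Li:fpsvaluemedian}, and returns. A node that had already concluded via a $\Vpost$ emerging from \Cref{Alg:defaultfallback} but delayed past the triggering $\FTpost$ has already returned, so it trivially does not block termination. This establishes termination for all honest nodes.

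I expect the only genuine subtlety to be the $Validate$ step at \cref{Li:fpvalidatemedian}: the proof must spell out that each of its three checks (signatures are from nodes in $N_t$, the reported value is indeed the median of $O$, and the sender is in $\aggset$) holds by construction of the honest aggregator's proposal, so that the honest proposal cannot be rejected by honest voters. Everything else is the same shape of ``honest super majority $+$ reliable delivery among honest parties $+$ SMR liveness'' argument used for \Cref{The:doracctermination}, only with the thresholds $f_c$ replaced by $f_t$ and the clan $N_c$ replaced by the tribe $N_t$.
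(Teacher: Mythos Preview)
Your proposal is correct and follows essentially the same approach as the paper's proof: trace one honest aggregator through the protocol (data feed terminates, $\ge 2f_t+1$ honest $\Value$ messages arrive, the resulting $\Vprop$ passes validation at every honest node, $2f_t+1$ votes yield a quorum, and SMR liveness delivers the $\Vpost$ to everyone). Your write-up is in fact more careful than the paper's---you spell out why each clause of $Validate$ holds and explicitly invoke \cref{prop:smrliveness}---but the structure and key ideas are the same.
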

\begin{proof}
\ifasyncdstonode
$GetDataFeed()$ terminates as per \cref{The:dfterminate}.
\else
	$GetDataFeed()$ terminates within $\dstimebound$ as per the design. 
 \fi
 All $2f_t+1$ honest nodes would send their corresponding median values to all the aggregators. Since there is at least
	one honest aggregator, it would eventually receive  values from $2f_t+1$ nodes. It is possible that in the subset $O$ considered by the aggregator, there are some Byzantine observations.
	However, the aggregator would send $\Vprop(O,\median{O},r)$ to all the nodes in $N_t$. An honest node can validate the aggregator's proposal, even if its own value is not in $O$.
	Therefore, an honest aggregator would eventually receive $2f_t+1$  votes on its proposal. This would be posted on \smr by the aggregator, ensuring at least one valid proposal on \smr. Therefore, all the honest nodes would eventually witness
	some valid   $\Vpost(O,v,r,\quorum)$ on \smr and would reach a consensus on $\svalue_r$ to be $v$.
\end{proof}

\begin{theorem}[\dora-CC-Fallback Termination] \label{The:doraccfallbacktermination} \cref{Alg:defaultfallback} eventually terminates for all honest nodes if (i) $N_c$ has honest majority, (ii) $N_t$ has honest super majority, and (iii) $\aggset$ has at least one honest aggregator.
\end{theorem}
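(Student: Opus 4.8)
The plan is to reduce the statement to the two termination results already in hand — the optimistic-path reasoning behind \cref{The:doracctermination} and the fallback-path result \cref{The:fallbacktermination} — via a case analysis on what the aggregators manage to post on the \smr for round $r$. First I would record two preliminary facts. (a) Every honest node completes the $\mathsf{init}$ block, since $GetDataFeed()$ returns within the bound $\dstimebound$ by design; hence every honest clan node sends its $\Value$ message and starts the local countdown timer $\fallbacktimer$ on \cref{Li:dpvinitftimer}, which is guaranteed to expire unless the node has already \textbf{Return}ed. (b) The only \textbf{Return} in \cref{Alg:defaultfallback} is inside the handler of \cref{Li:dpvwitnessmu}, which fires only upon witnessing a $\Vpost(\cdot,\cdot,r,\cdot)$ on the \smr.

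The core claim I would establish is: for round $r$, at least one of a $\Vpost(\cdot,\cdot,r,\cdot)$ message or an $\FTpost(\mathit{fallback},r,\cdot)$ message is eventually posted on the \smr. Suppose, toward this, that no $\Vpost$ for $r$ is ever posted. Then by fact (b) no honest node ever \textbf{Return}s, so by fact (a) every honest clan node eventually times out on $\fallbacktimer$ and, via \cref{Li:dpvnodefallback}, sends $\Voteft(\mathit{fallback},r)$ to all aggregators. By hypothesis (i) the clan has honest majority, i.e.\ at least $f_c+1$ honest nodes out of $|N_c| = 2f_c+1$; since links between honest parties do not drop messages, the honest aggregator guaranteed by (iii) eventually receives $f_c+1$ such votes and forms $\quorum$. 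Its own $\fallbacktimer$ also expires (it runs the same $\mathsf{init}$), so the guard on \cref{Li:dpvtimeoutfallbacksmr} is met and it posts $\FTpost(\mathit{fallback},r,\quorum)$ (and if it had already witnessed an $\FTpost$ posted by another aggregator, one is on the \smr anyway). This proves the claim.

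To finish, I would invoke \smr safety and liveness. Let $m$ be the first message, in the common total order, among all $\Vpost(\cdot,\cdot,r,\cdot)$ and $\FTpost(\cdot,r,\cdot)$ messages for round $r$; by the claim $m$ exists. If $m$ is a $\Vpost$, then by liveness every honest node eventually witnesses $m$, and since it precedes every $\FTpost$ for $r$, the handler of \cref{Li:dpvwitnessmu} fires and the node terminates. If $m$ is an $\FTpost$, every honest node eventually witnesses it without having previously witnessed any $\Vpost$ for $r$, so the handler of \cref{Li:dpvwitnessfb} fires and the node switches to \cref{Alg:fallbackprotocol}; its hypotheses — $N_t$ honest super-majority and an honest aggregator — are exactly (ii) and (iii), so by \cref{The:fallbacktermination} that protocol terminates for all honest nodes. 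Either way every honest node concludes round $r$.

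The main obstacle is making the case analysis airtight around the race between the optimistic and fallback paths: a Byzantine aggregator can post a well-formed, quorum-certified $\Vpost$ at the same time that honest clan nodes time out and drive an $\FTpost$, so I must argue that the $\fallbacktimer$ guard on \cref{Li:dpvtimeoutfallbacksmr}, the ``first $\Vpost$'' semantics of \cref{Li:dpvwitnessmu}, and the \smr total order together prevent any honest node from getting wedged between the two handlers, and that a node which switches to \cref{Alg:fallbackprotocol} is handed off cleanly (in particular, in the branch where $m$ is an $\FTpost$, no honest node — including the honest aggregator — has \textbf{Return}ed, so all of them genuinely run the fallback and \cref{The:fallbacktermination} applies). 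Everything else is routine bookkeeping on the message flow.
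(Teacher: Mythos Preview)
Your proposal is correct and follows essentially the same structure as the paper's proof: establish that at least one of a $\Vpost$ or an $\FTpost$ for round $r$ reaches the \smr, then case-split on which appears first in the total order, invoking \cref{The:fallbacktermination} for the fallback branch. The paper's only additional wrinkle is that, in the $\FTpost$-first branch, it explicitly points out that a late $\Vpost$ emitted from \cref{Alg:defaultfallback} can still arrive afterward and be consumed by the handler at \cref{Li:fpwitnessvpost} inside \cref{Alg:fallbackprotocol}---a sub-case your direct appeal to \cref{The:fallbacktermination} subsumes but leaves implicit.
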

\begin{proof}
\ifasyncdstonode
All the honest nodes would finish gathering their data feed as per \cref{The:dfterminate}.
	\else
 All the honest nodes would finish gathering their data feed within $\dstimebound$.
\fi
	We will prove the termination of round $r$ by all the honest nodes. 

	An honest node would conclude a round and move to the next round only upon witnessing a $\Vpost(Obs,v,r,\quorum)$ for some set of observations $Obs$. There are multiple scenarios to be considered here.

	\begin{enumerate}
		\item If at least one of the aggregators is able to form a coherent cluster and post a value via $\Vpost$ message on \smr as the first message for round $r$, then all the honest nodes would witness it and conclude round $r$. (\cref{Li:dpvpostmusmr}).
		\item If the first message for round $r$ that appears on \smr is an  $\FTpost$ message (\cref{Li:dpvwitnessfb}). In this case, all of the honest nodes would switch to the fallback protocol (\cref{Alg:fallbackprotocol}). It is possible that some other aggregator in $\aggset$  is able to post a value via a $\Vpost$ message on \smr after the $\FTpost$ message mentioned earlier. Even in this case, all the honest nodes would witness this $\Vpost$ in \cref{Alg:fallbackprotocol} (\cref{Li:fpwitnessvpost}) and conclude round $r$.  If no $\Vpost$ is posted on \smr from \cref{Alg:defaultfallback} after an  $\FTpost$, since all the honest nodes would have switched to the fallback protocol, we are guaranteed termination due to \cref{The:fallbacktermination}.
	\end{enumerate}

	Therefore, we need to prove that from \cref{Alg:defaultfallback} either $\Vpost$ or $\FTpost$  will definitely be posted.
	\begin{enumerate}
		\item If an honest aggregator is able to form a coherence cluster, due to an honest majority in the clan, it would be able to get $f_c+1$ votes approving the cluster and its mean and would be able to send a $\Vpost$ to \smr.
%		\item If an honest aggregator is able to form a coherence cluster and if the mean value is not history consistent, it will send out $\FPprop$ to all the nodes in $N_c$ (\cref{Li:dpvfallbackproof}). Since we have honest majority in $N_c$, this will be validated and voted by all the $f_c+1$ honest nodes (\cref{Li:dpvapprovefallback}) and the aggregator would be able to prepare a \quorum  and post $\FPpost$ on \smr (\cref{Li:dpvpostfallbackproof}).
		\item If no aggregator is able to form a coherent cluster, all the honest nodes would timeout on their $\fallbacktimer$ eventually. Therefore, there will be $f_c+1$ $\Voteft$ votes will be sent to all the aggregators  eventually and some honest aggregator would post $\FTpost$ on \smr.
	\end{enumerate}

\end{proof}

SMR safety (\cref{prop:smrsafety}) ensures that all the nodes eventually observe messages in the exact same total order. The agreement property (\cref{prop:doraagreement}) follows directly from this SMR safety property.
Notice that for $n>2f$ with $(n,n-f)$ threshold signature setup, an aggregator may create multiple signed quorums; however, the SMR safety again helps as the nodes pick the first published quorum.

\subsubsection{Analysis of Communication Complexity \label{Se:dorafallbackcomm}}

Let us analyze how many messages and bits need to be transmitted by \cref{Alg:defaultfallback} and \cref{Alg:fallbackprotocol}. We shall use  $n_c=|N_c|$ to denote the size of the clan,  $n_a = |\aggset|$ to denote the size of the family of aggregators, and $n_t=|N_t|$ to denote the size of the tribe.

To obtain data from the data sources, we need $(2f_d+1)n_c$ messages. Nodes would then send $n_an_c$ $\Value$ messages to aggregators. After aggregation, the aggregators would send $n_cn_a$ $\Vprop$ messages. The nodes would generate $n_cn_a$ $\Votevp$ messages to be sent to aggregators which would subsequently send at most $n_a$ messages to the \smr. This would result in the total number of messages being transmitted by \cref{Alg:defaultfallback} to $(2f_d+1)n_c + 3n_cn_a + n_a$ when a coherent cluster is formed. Considering $n_cn_a$ to be the dominating term, the message complexity would be $O(n_cn_a)$ in the optimistic scenario. Let the total number of variables for which we need to run \dora be denoted as $n_{\commodity}$. If we run \dora for different $\commodity$ independently, then the message complexity would increase to $O(n_cn_an_{\commodity})$. However, one can batch \dora messages for different \commodity together in which case the message complexity would again reduce to $O(n_cn_a)$.

In the case when $\fallbacktimer$ times out, the nodes directly send votes to initiate the fallback. This would require $O(n_cn_a)$ messages at the most. Though each aggregator may try to post $\FTpost$ to \smr, only one can succeed and the other will be discarded as a duplicate by \smr. However, it should be counted since each aggregator may transmit this message, thus requiring a total of $O(n_a)$ messages. 
When the nodes switch to the fallback protocol, the data feed now requires $O((2f_d+1)n_t)$ messages. There would be $O(n_tn_a)$ $\Value$ messages by the nodes, $O(n_tn_a)$ $\Vprop$ messages by the aggregators and $O(n_tn_a)$ $\Votevp$ messages by the nodes. The aggregators would at most transmit $O(n_a)$ messages to the \smr.  Therefore, the total number of messages in a given round is $O(n_tn_a)$, since $n_t > n_c$.

	For analyzing complexity in terms of the number of bits transmitted, we need to take into account the size of each message. Let the length of the data-feed message and  $\Value$ message and hash of any messages be upper-bounded by $k$. We assume that within this length we can store all the information needed for the protocol such as a round identifier, a node identifier, a data-source identifier, a \commodity value etc. The data feed stage would require $O(k(2f_d+1)n_c)$ bits. Let us assume $\lambda$ to be the length of a signature. There would be $O((k+\lambda)n_cn_a)$ bits required for the nodes to send $\Value$ messages to aggregators. Since the aggregator sends back not only the $\mu$ but also the set of original $\Value$ messages forming a coherent cluster, the number of bits required for $\Vprop$ messages would be $O(((k+\lambda)n_c + k+ \lambda)n_c  n_a)$. The nodes send their votes on a value proposal, thus requiring $O((k+\lambda) n_cn_a)$ bits for $\Votevp$ messages. We assume that the nodes would sign the hash of the $\Vprop$ message received earlier and return it as an approval.   The aggregators then form a \quorum and send it to the \smr, which would require $O((k+\lambda n_c)n_a)$ bits of transmission since each \quorum would have its size in proportion to $n_c$. Therefore, the communication complexity in bits would be $O((k+\lambda)n^2_cn_a)$ for \dora-CC. For $n_{\commodity}$ many variables it would be $O((k+\lambda)n^2_cn_an_{\commodity})$.

	In case the fallback happens, the communication complexity in terms of bits would be similar to the one described above, but now the messages would be transmitted at a tribe level. Thus the number of bits required for a given round would be $O((k+\lambda)n^2_tn_a)$ for a single \commodity and $O((k+\lambda)n^2_tn_an_{\commodity})$ for $n_{\commodity}$ variables.

\subsubsection{Error analysis}
We define error of the protocol as the deviation that a presence of Byzantine nodes may cause in the output of the protocol.
When the number of Byzantine observations is $f$ and we consider $2f+1$ observations, then we know that as per \cref{The:median} the median value will fall within the bounds defined by $\hmin$ and $\hmax$. However, this is not the case when we only consider $f+1$ observations. A Byzantine aggregator could find $\hmax$ and insert $f_c$ Byzantine observations with value $\hmax+\agreedist$ to form a cluster. It is possible that all the other $f_c$ honest observations have the value $\hmin$. Had only the honest observations been considered to form a cluster, the value of $\mu$ would have been $\frac{f_c\hmin+\hmax}{f_c+1}$ but instead, we would end up with $\mu = \frac{\hmax+f_c(\hmax+\agreedist)}{f_c+1}$. 
%For honest values to form a cluster, it must be the case that $\agree{\hmin}{\hmax}$. 
Therefore, we would have an error upper bound of $\|\hmin-\hmax\|_1+\agreedist$ when \cref{Alg:defaultfallback} emits $\svalue_r$ via the \dora-CC protocol.

In the case of \cref{Alg:defaultfallback}, it may be possible that $\agree{\hmin}{\hmax}$ does not hold and an honest aggregator would have proposed a fallback. However, a Byzantine aggregator may be successful in preventing the fallback by forming a cluster of one honest observation $\hmax$ with $f_c$ Byzantine observations with value $\hmax+\agreedist$. Had the protocol switched to the fallback protocol, the smallest value of the median produced by \cref{Alg:fallbackprotocol} would have been $\hmin$. Therefore, even in this case, the upper bound for the error would be  $\|\hmin-\hmax\|_1+\agreedist$. The arguments with respect to Byzantine observations forming a cluster with $\hmin$ would be symmetric and  do not result in any change in the error upper bound.

\begin{theorem}[Least error upper bound] \label{The:minerror} A protocol for agreeing on a \svalue, where $3f_t+1$ nodes participate out of which $f_t$ of these nodes could be Byzantine, and a median of values from non-deterministically chosen\footnote{The non-determinism choice is introduced due to the non-determinism in network delay} $2f_t+1$ nodes is proposed as the \svalue, would have an error upper bound of at least $\|\hmin-\hmax\|$.
\end{theorem}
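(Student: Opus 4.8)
The plan is an adversarial indistinguishability argument that exhibits a single worst-case execution. Assume $f_t\ge 1$, and let the $2f_t+1$ honest nodes hold inputs so that exactly $f_t+1$ of them report the value $a:=\hmin$ and the remaining $f_t$ report $b:=\hmax$ (so $a$ and $b$ really are the honest extremes for this round). Let all $f_t$ Byzantine nodes claim the value $b$. Because the $2f_t+1$ values whose median is output are selected non-deterministically --- that is, effectively by the network adversary --- I would schedule so that the selected set consists of the $f_t$ honest $b$-reporters, the $f_t$ Byzantine $b$-reporters, and a single honest $a$-reporter, a total of $2f_t+1$ values. The resulting multiset is $\{a\}\cup\{b,\dots,b\}$ with $2f_t$ copies of $b$, whose $(f_t+1)$-st smallest element is $b$ (since $f_t\ge 1$); hence the protocol outputs $\svalue=b=\hmax$.

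Next I would fix the baseline against which the error is measured: on these same honest inputs, free of Byzantine interference, the value the protocol ``ought'' to emit is the honest-only median, which by \cref{The:median} lies in $[\hmin,\hmax]$; here the only admissible selection is the full honest multiset with $f_t+1$ copies of $a$ and $f_t$ copies of $b$, whose $(f_t+1)$-st smallest element is $a=\hmin$. Thus the mere presence of the Byzantine nodes drives the output from $\hmin$ all the way to $\hmax$, a deviation of $\|\hmin-\hmax\|_1$. By the definition of the protocol's error (the deviation the presence of Byzantine nodes may cause), this is a lower bound on the worst-case error, so no protocol of this form can guarantee an error upper bound below $\|\hmin-\hmax\|_1$. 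A mirror-image schedule (Byzantine nodes claiming $\hmin$, $f_t+1$ honest nodes at $\hmax$, output forced to $\hmin$) shows the deviation is attainable on the other side as well.

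The main obstacle is not the combinatorics --- the median multiplicities are routine once the counts are pinned down --- but getting the framing exactly right: one must state precisely what the error is being compared against, namely the honest-only outcome guaranteed by \cref{The:median} and \cref{prop:hbounds}, and one must justify that the adversary can \emph{simultaneously} equivocate on $f_t$ nodes and withhold $f_t$ honest inputs from the aggregator's collected set of $2f_t+1$ values, since it is exactly this coincidence of Byzantine misreporting and non-deterministic subset selection that the construction exploits. I would close by observing that this matches, up to the additive $\agreedist$ slack, the error bound established for \cref{Alg:defaultfallback} in the preceding error analysis, so the fallback's median rule is essentially optimal within this class of protocols.
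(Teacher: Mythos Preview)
Your proposal is correct and follows essentially the same adversarial-scenario argument as the paper: construct honest inputs whose all-honest median is $\hmin$, have the Byzantine nodes report values at (or just above) $\hmax$, and let the non-deterministic subset selection pick a $2f_t+1$-set whose median is $\hmax$. The paper's proof differs only cosmetically—it puts $2f_t$ honest nodes at $\hmin$ and one at $\hmax$, with Byzantine nodes at $\hmax+c$ for some $c>0$, whereas you use $f_t+1$ honest at $\hmin$, $f_t$ honest at $\hmax$, and Byzantine nodes exactly at $\hmax$—but both choices yield the same $\|\hmin-\hmax\|_1$ gap between the honest-only and adversarial medians.
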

\begin{proof}
	Out of the total $3f_t+1$ nodes, $f_t$ could be Byzantine. Out of $2f_t+1$ honest observations, let $2f_t$ of them have the value $\hmin$ with one honest observation having the value $\hmax$. Let all the Byzantine observations have the value $\hmax+c$, where $c>0$. Out of these $3f_t+1$ observations, if the median is calculated from only $2f_t+1$ honest observations, the median would be $\hmin$. Instead, if the median is calculated from $2f_t+1$ values where $f_t$ honest values are $\hmin$, one honest value is $\hmax$ and all the Byzantine observations have the value $\hmax+c$ then the median would be $\hmax$.
	Therefore, for any such protocol, the largest possible error can not be less than $\|\hmin - \hmax\|_1$.
\end{proof}

Note that \cref{The:minerror} would hold for any protocol, that computes a median of only $2f_t+1$ values out of the total $3f_t+1$ values that may be available.

It is evident from \cref{The:minerror} that \cref{Alg:defaultfallback} may increase the error upper bound by only $\agreedist$.

\subsubsection{Error compounding}

Since it is possible that Byzantine actors may introduce an error of $\errorbound$, one can wonder if a Byzantine aggregator across several successive rounds can compound the error. This is not possible
because an $\svalue_r$ is bounded by $\hmin$ and $\hmax$ of honest observations from the round $r$ only. Therefore, the protocols in  \cref{Alg:defaultfallback} or \cref{Alg:fallbackprotocol} do not allow the error to be compounded in successive rounds.

\section{Analysis \label{Se:empirical}}
\subsection{Empirical Analysis of Agreement Distance \label{Se:agreedist}}
We would describe some empirical analysis and simulations of our protocol done on real-world data.

We obtained data for \BTC price in \USD from $7$ different exchanges, namely, \binance, \coinbase, \cryptocom, \ftx, \huobi, \okcoin, and \okex from \adiststartdate to \simenddate. We divided this data into two parts. The first part consists of the data from \adiststartdate to \adistenddate which was used in determining various values for \agreedist. The second part consists of the data from \simstartdate to \simenddate including the turbulent \ftx collapse period, which was used to simulate the protocol with various values of \agreedist.

The data from \adiststartdate to \adistenddate was divided into (i) $30$ second windows, and (ii) $60$ second windows. For each window, the median and the mean were calculated from all the values/observations available within that window. We observed that the mean of these means and the mean of these medians are less than $\$0.02$ away from one another.
For $30$ second and $60$ second windows, the mean of means was around $\$19605$ and $\$19606$ respectively. Therefore, we used $\$19605.5$ as the representative price of \BTC for the duration of \adiststartdate and \adistenddate.

 The simulations were done with \agreedist set to various values from $0.02\%$ to $0.55\%$  of $\$19605.5$, the mean value of \BTC calculated as described above.

For simulation also, we used $30$ and $60$ seconds as two different values for $\dstimebound$. The simulation data from \simstartdate and \simenddate was divided into (i) $30$ second windows, and (ii) $60$ second windows. Therefore, every node  obtained its data from data sources from the same window for a given round of oracle agreement. We assume that the nodes of the oracle network have clock-drift within a few hundred milliseconds.~\cite{FBEngineering}
 We simulated the behavior of $7$ nodes. Each node was randomly assigned $5$ out of the $7$ exchanges.  Data from each window was used to simulate one round of agreement. 

 Availability of data from various exchanges is shown in \cref{Tab:exnullvalue}. Note that $4$ out of $7$ exchanges did not provide any value more than $50\%$ of the time for the simulation duration of \simstartdate to \simenddate. Therefore, our design choice of using multiple data sources per node is justified to improve the reliability of our protocol. Notice that \okcoin produces a value almost at every minute, therefore, while for $30$ second windows, it provides values for almost every alternate window, for $60$ second window its availability is very high.

For every round, every node would compute the median from the values it obtained from its $5$ assigned exchanges. If an exchange had multiple values within the given window, only the latest would be considered by the node. Once every node has computed its median, these medians would be sorted to see if any $4$ of the $7$ nodes form a coherent cluster for a given value of \agreedist. 
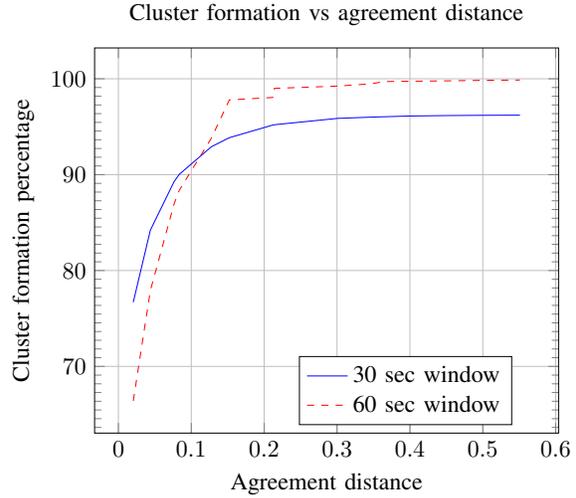
\begin{figure}
\begin{tikzpicture}[scale=0.9]
        \begin{axis}[title=Cluster formation vs agreement distance,
        xlabel={Agreement distance},
        ylabel={Cluster formation percentage}, minor y tick num=10, grid,  legend style={anchor=north east, at={(0.9,0.2)}}]
\addplot [blue] table [col sep=comma,x index=0, y index=1]  {ccformation.csv}; 
\addlegendentry{30 sec window};
\addplot [red,dashed] table [col sep=comma,x index=0, y index=2]  {ccformation.csv};
\addlegendentry{60 sec window};
\end{axis}
\end{tikzpicture}
\caption{Percentage of times coherent cluster was formed as agreement distance increases. Agreement distance is specified as percentage of $19605$, the average \BTC price  }
\label{Fi:ccformation}
\end{figure}
\cref{Fi:ccformation} shows the results from our simulation. As the size of the observation window increases, one would naturally expect to see more deviation in values among values of various exchanges. Therefore, one would expect that for a given \agreedist, the nodes would be able to form coherent clusters more often for $30$ second windows as compared to $60$ second windows. With \agreedist of $\$16$, coherent cluster is achieved $90\%$ and $88\%$ for $30$ and $60$ seconds window respectively. However, note the drastic increase in data availability from \okcoin from $30$ second to $60$ second observation period. This has contributed to the increased percentage of coherent cluster formation for a $60$ second window for larger \agreedist. For example with \agreedist being $\$53$, cluster formation is achieved for $99\%$ of the time for $60$ second window.

Note that the choice of \agreedist has a bearing on both the safety and the performance of the protocol. Smaller values of \agreedist would result in the protocol having to fall back more often, whereas higher values of \agreedist potentially  allow higher deviation of \svalue from the ideal representative value of the mean of honest values. 

If the value of a \commodity increases or decreases significantly, for safety and performance reasons, \agreedist should be increased and decreased accordingly.

\begin{table}
\centering
\caption{Percentage of times various exchanges did not produce any value for a time window.}
\label{Tab:exnullvalue}
\pgfplotstabletypeset[col sep=comma, row sep=newline, display columns/0/.style={string type, column type=|c}, display columns/1/.style={column type=|p{2cm}|}, display columns/2/.style={column type=p{2cm}|}, after row=\hline, every head row/.style={before row=\hline}]
{
Exchange, Null value percentage for 30 sec, Null value percentage for 60 sec
\binance,51.653,52.66
\cryptocom,78.881,75.712
\coinbase, 97.018, 96.306
\ftx,73.873,74.116
\huobi,44.447,45.509
\okcoin,46.659,0.495
\okex,26.837,28.334
}

\end{table}

Increasing \fallbacktimer may have have two effects. When prices are not fluctuating too much, a large \fallbacktimer allows the round to complete and \svalue to appear on \smr. However, large value of \fallbacktimer increases round completion time when coherent cluster can not be formed and a fallback is required. Reducing \fallbacktimer to a very small value results in much higher percentage of rounds resulting in a fallback, thus, degrading the performance. Therefore, \fallbacktimer should be set after measuring average time to complete a round of \doracc.

\subsection{Theoretical Analysis of Probabilistic Safety \label{Se:probsafety}}

  \begin{figure}
   \centering
     \begin{subfigure}[b]{0.48\columnwidth}
         \centering
\begin{tikzpicture}[scale=.49]
\begin{semilogyaxis}[title={\Large Failure prob. vs number of aggregators},
	xlabel={\Large Size of a family of  Aggregators},
	ylabel={\Large Failure probability}]
\addplot [blue] table [col sep=comma]  {aggfamilysize.csv};
\end{semilogyaxis}
\end{tikzpicture}
\caption{Probability of having no honest aggregators in the family of aggregators as the size of the family increases. Here total nodes are $100$ and Byzantine nodes are $33$}
\label{Fi:aggfamilyprob}
     \end{subfigure}
     \hfill
     \begin{subfigure}[b]{0.48\columnwidth}
         \centering
\begin{tikzpicture}[scale=.49]
\begin{semilogyaxis}[title= {\Large Clan failure prob. vs Tribe size},
	xlabel={\Large Size of the tribe},
	ylabel={\Large Clan failure probability},]
\addplot [blue] table [col sep=comma, x index=0, y index=3,]  {clanfailprob.csv};
\end{semilogyaxis}
\end{tikzpicture}
\caption{Probability of having at least one clan with a Byzantine majority as the size of the tribe increases. Here number of clans are $5$ and Byzantine nodes are $33\%$}
\label{Fi:clanfailprob}
  \end{subfigure}
        \caption{Probabilistic Safety Analysis}
\label{fig:ProbabilisticSafety}
\end{figure}
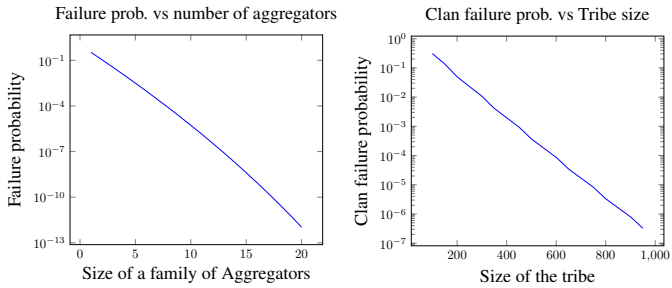

We mention in \cref{Se:protocol} that we employ multiple aggregators. \cref{Fi:aggfamilyprob} shows a logarithmic plot on how the probability of having an entire family consisting of Byzantine aggregators reduces as we increase the size of the family of aggregators. Since $1$ in $3$ nodes in the tribe is Byzantine, it is evident that as we increase the size of the family of aggregators, the probability of not having a single honest aggregator drops exponentially fast.
\cref{Fi:clanfailprob} shows how the probability of having at least one clan with a Byzantine majority changes as we increase the size of the tribe. In the logarithmic plot,  we can observe that increasing the size of the tribe to a few hundred nodes would ensure  that none of the $5$ clans randomly drawn from the tribe would have a Byzantine majority with a very high probability. To fully exploit the ability to tolerate up to $49\%$ Byzantine nodes, we propose to employ probabilistic safety guarantees. \cref{Fi:clanfailprob} provides guidelines on how many total nodes would be required to achieve the safety property with a very high probability.

Note that, if the adversary is fully adaptive and can compromise nodes after families and clans are formed, the above probabilistic safety analysis is not applicable. We expect clans and families to be updated regularly in practice. 

\section{Related Work}
 In this section, we compare our approach to relevant prominent industrial as well as academic solutions to the oracle problem.

As we describe earlier, Chainlink's Off-Chain Reporting Protocol (OCR)~\cite{ocr} is the most employed oracle solution today. Under partial synchrony, OCR requires Byzantine nodes to be less than $\frac{1}{3}$  fraction of the total population. 
%
% The protocol described in this work is similar to the fallback protocol described in ~\cref{Alg:fallbackprotocol}. Therefore, the bounds on the value produced are similar in both cases. 
Our unique ideas of defining agreement between nodes based on {\em agreement distance} and better leveraging \smr allow our protocol to tolerate up to $\frac{1}{2}$ Byzantine nodes. 

Note that the OCR approach inherently will not be able to tolerate more than $\frac{1}{3}$ Byzantine nodes
even if we make the coherent cluster assumption on its input and update the OCR protocol: 
this approach uses the blockchain (smart contract) purely for broadcasting a report/result and not for ordering and subsequently preventing equivocation. As a result,
it will require more than $\frac{2}{3}$ honest nodes for broadcast itself~\cite{Non-equivocation}
%     does not depend on a separate \smr and achieves agreement independently.
While the OCR approach is indeed more self-sufficient, we argue that if one has an access to \smr (even via a smart contract), there is an advantage and merit in leveraging it in order to be able to tolerate a higher fraction of Byzantine nodes.
%Unlike ~\cref{Alg:defaultfallback}, OCR 
% and therefore allows one to manage DORA with a fraction of nodes.

Moreover, our DORA protocol is asynchronous by design; thus, if the underlying blockchain is secure under asynchrony, our DORA protocol also becomes secure under the asynchronous setting. Our timer $\fallbacktimer$ does not depend on the transmission time, the adversary can only trigger the fallback protocol using asynchrony.

%The bound on the value is higher by $\agreedist$ in ~\cref{Alg:defaultfallback} as compared to OCR. Since $\agreedist$ is a parameter of the protocol, it can be kept smaller than the average deviation observed while executing a trade. Note that the maximum deviation of $\agreedist$ can only happen in the case when an adversary controls $49\%$ of the nodes.

%OCR uses a single leader and therefore, if the leader has crashed or is under adversarial control, the protocol is slowed down because a new leader needs to be elected. 
%%Additionally, OCR proceeds sequentially by producing value from one round to another. 
%Tick-start DORA, on the other hand uses a multi-aggregator model where the requirement is  only to have one honest aggregator. In addition, the Tick-start DORA design allows rounds to proceed independently from one another, thus preventing any bottleneck from happening.

%\subsection{DORA via Smart-Contract}

In the Pyth network~\cite{pyth},
%There are other alternative mechanisms to solve a Distributed Oracle Agreement problem. For example, one of the alternatives  is to solve the DORA problem via a smart contract~\cite{pyth}.  In this case,  data
sources directly post their signed values to a smart contract.
While we do not have to worry about Byzantine oracle nodes in that case and the oracle problem becomes purely an availability problem there,
most data sources do not employ the digital signature infrastructure today. Even if some data sources start to employ data authentication primitives in the future, there will be compatibility issues as different blockchains employ different signature mechanisms.

%since the underlying blockchain mechanism ensures that Byzantine nodes can not affect the outcome if honest nodes have a supermajority in the tribe. If we want to admit the possibility of $f_d$ data sources malfunctioning, then the contract needs to wait for at least $2f_d+1$ data sources to post their value. The aggregation here can be done by computing the median.

%While, this has the advantage of having a tighter bound of $[\hmin,\hmax]$, each step of the computation is done inside a smart contract, which can add to the gas cost. The gas cost would increase as the number of variables \commodity for which we want to compute a representative value \svalue increases. Additionally, this approach could be relatively slow since all the data would be posted as a separate transaction by data sources first and only then can the on-chain aggregation of these values occur. In contrast, in our approach, only one value-posting transaction is needed to disseminate a value for a given round.

Provable~\cite{provable} and Town Crier \cite{TownCrier} are earlier approaches that utilize trusted-execution environments (TEEs) 
to furnish data to the blockchains. However, TEEs have proven to be notoriously hard to secure and the security provided by TEEs has been broken by a wide variety of attacks.
As a result, TEE-based oracles did not get enough traction yet.

As TLS becomes the de facto standard for (secure) communication  over the internet, TLS-N~\cite{TLSN}, DECO~\cite{DECO}, and recently ROSEN~\cite{ROSEN} allow oracle nodes to prove facts about their TLS sessions to data sources. However, there remain several challenges.
TLS-N and ROSEN enable a data source to
sign its TLS session with the oracle node; however, they need non-trivial updates to the TLS protocol and thus suffer from high adoption costs.
 DECO ensures that the only TLS client has to be modified but no server-side modifications are required, which makes it better suited as only the oracle nodes' TLS code has to be updated. 
However, the proposed custom three-party handshake mechanism allows the creation of proofs that are acceptable only by
a pre-specified, actively participating verifier,
which may not be applicable to most blockchain nodes. 

We refer the reader to the blockchain oracle surveys \cite{ACMSurvey, 9086815} for  discussion about several other informal blockchain oracle designs.

\section{Conclusion and Future work}
We present a novel distributed oracle agreement protocol that allows the oracle network to function with only $2f+1$ nodes when the prices of a commodity are not fluctuating wildly, by leveraging \smr as an ordering primitive and updating the notion of agreement among nodes. 

We have shown that data sources do pose a data availability risk and therefore it is wise to mandate nodes to gather data from multiple data sources. We have also shown the trade-off that the agreement distance parameter offers in terms of safety and performance.

We show that it is possible to build safe and efficient oracle networks with high probability with appropriate choices on the size of the network and the size of the family of aggregators.

In the future, it can be interesting to explore how \agreedist can be adjusted in an algorithmic fashion without allowing any room for the compounding of errors by a malicious entity.
\section*{Acknowledgements}
We thank anonymous reviewers and Akhil Bandarupalli for their helpful comments. 
We thank Luying Yeo for assisting with creative figures.
\bibliographystyle{IEEEtran}
\bibliography{references}
\ifextendedversion
\else
\appendices
\section{Extensions}
\subsection{Assigning weights to the data sources}

In \cref{Alg:datafeed}, we mandate the nodes listen to $2f_d+1$ data sources, since $f_d$ of these data sources can turn Byzantine. Each node computes the  median of all the observations it makes from these $2f_d+1$ data sources. This algorithm, however, treats every data source equally. Different data sources, however, may have different reliability guarantees, trading volumes, downtime statistics and security practices, etc. It makes sense that when we look at these aspects, different data sources are treated differently. For example, a data source that provides better reliability guarantees and follows the best security practices should be given more weight when compared to some other data sources which may be inferior in these parameters. 

To accommodate the differing traits of different data sources, we can define a weight function $ w : DS \rightarrow \Nat^+$, which assigns different positive integer weights to different data sources. The weight can be determined by considering parameters like historical reliability, volume/scale, and security practices followed by the data source etc., and a suitable value can be assigned through data governance. It is important to note that while we assign weight to the data sources, a data source crash or malfunction would affect any function that takes its inputs in proportion to its weight. Therefore, we now must discuss the total weight of the data sources that may turn Byzantine. In this modified setting, we can think of $f_d$ as the sum of the weights of the data sources that can turn Byzantine. To accommodate this change, \cref{Alg:datafeed} can be modified where every node listens to data sources such that their total weight is $2f_d+1$. In this case, whenever a value $v$ is received from a data-source $ds$, we would store as many as $w(ds)$ copies in the multi-set $obs$ (\cref{Li:dfgetvalue}). The size of $Obs$ would still be between $f_d+1$ and $2f_d+1$, the lower bound on the number of observations in $obs$ from honest data sources would still be $f_d+1$ and the number of observations from Byzantine data sources would be upper-bounded by $f_d$. \cref{The:median} would still hold in this case. The rest of the protocol for \dora would remain the same.

\fi

\end{document}